\documentclass[aps,nofootinbib,preprintnumbers]{revtex4-1}
\usepackage{color}
%\usepackage{mathtools}
%% Margins

\topmargin=-.1in
\textwidth6.1in
\textheight8.3in
\oddsidemargin=0.3in         
\evensidemargin=0.3in
\pagestyle{plain}

\usepackage{amsthm,amsmath,amssymb,amscd}
\usepackage{amssymb}
\usepackage{graphics}

\usepackage{slashed}
%\DeclarePairedDelimiter\ceil{\lceil}{\rceil}
%\DeclarePairedDelimiter\floor{\lfloor}{\rfloor}
%
%\documentclass[11pt]{article}
%\pdfoutput=1
%%\usepackage{jheppub}
%
%%%\usepackage{graphicx}% Include figure files
%%%\usepackage{dcolumn}% Align table columns on decimal point
%%\usepackage{bm}% bold math
%%\usepackage{hyperref}% Auto-generate hyperlinks
%%\usepackage{amsmath,amssymb,amsfonts,graphicx,float}
%%
%%
%%\documentclass[11pt]{article}
%\usepackage{latexsym}
%\usepackage{graphics}
%\usepackage{amsmath}
%\usepackage{amsfonts}

%%%%%%%%%%%%%%%%%%%%%%%%%%%%%%%%%%%%%%%%%%%%%%%%%%%%%%%%%%%%%%%%%%%%%%%%%%%%%

\usepackage{amssymb}
\usepackage{braket}
\usepackage{upgreek}

\newcommand{\half}{\frac{1}{2}}

\newcommand{\beq}{\begin{eqnarray}}
\newcommand{\eeq}{\end{eqnarray}}

\def\a{\alpha}

\usepackage{tikz}
%\usepackage[justification=justified,singlelinecheck=false]{caption}
%\usepackage[justification=justified,singlelinecheck=false]{subcaption}
%\newcommand{\myfig}[3]{
% usage is \myfig{filename}{w}{Caption}; where 'w' is the desired width in cm
% note: use filename root only -- then pdftex uses .pdf, dvips uses .eps
% a label is generated automatically -- use \ref{filename} to refer to figure
%\begin{figure}
%\centering
%\includegraphics[width=#2cm]{#1}\caption{#3}\label{#1}
%\end{figure}
%}

% Environments

\newtheorem{thm}{Theorem}[section]
\newtheorem{prop}[thm]{Proposition}

\newtheorem{lemma}[thm]{Lemma}
\newtheorem{defn}[thm]{Definition}

\newtheorem{preremark}[thm]{Remark}
\newenvironment{remark}{\begin{preremark}\rm}{\medskip \end{preremark}}
\numberwithin{equation}{section}

%%%%%%%%%%%%%%%%%%%%%%%%%%%%%%%%%%%%%%%%%
\begin{document}
\title{ Anomalous Dimensions for Boundary Conserved Currents in Holography via the Caffarelli-Silvestre Mechanism for p-forms}

%\preprint{\today}

\author{Gabriele La Nave}
\affiliation{Department of Mathematics, University of Illinois, Urbana, Il. 61820}
\author{Philip W. Phillips}
\affiliation{Department of Physics and Institute for Condensed Matter Theory, University of Illinois, 1110 W. Green Street, Urbana, IL 61801}

\begin{abstract}
Although it is well known that the Ward identities prohibit anomalous dimensions for conserved currents in local field theories, a claim from certain holographic models involving bulk dilaton couplings is that the gauge field associated with the boundary current can acquire an anomalous dimension.  We resolve this conundrum by showing that all the bulk actions that produce anomalous dimensions for the conserved current generate non-local actions at the boundary.  In particular, the Maxwell equations are fractional.  To prove this, we generalize to p-forms the Caffarelli/Silvestre (CS) extension theorem.  In the context of scalar fields, this theorem demonstrates that second-order elliptic differential equations in the upper half-plane in ${\mathbb R}_+^{n+1}$ reduce to one with the fractional Laplacian, $\Delta^\gamma$, with $\gamma \in \mathbb R$, when one of the dimensions is eliminated.   From the p-form generalization of the CS extension theorem, we show that at the boundary of the relevant holographic models, a fractional gauge theory emerges with equations of motion of the form, $\Delta ^\gamma A^t= 0$ with $\gamma$ $\in R$ and $A^t$ the boundary components of the gauge field.  The corresponding field strength $F = d_\gamma A^t= d\Delta ^\frac{\gamma-1}{2} A^t$ is invariant under 
$A^t \rightarrow A^t+ d_\gamma \Lambda$ with the fractional differential given by
 $d_\gamma \equiv (\Delta)^\frac{\gamma-1}{2}d$, implying that $[A^t]=\gamma$ which is in general not unity.
  \end{abstract}

\maketitle
%\tableofcontents

\section{Introduction}

A text-book problem\cite{gross,peskin} in quantum field theory is to prove that conserved quantities such as the electrical current cannot acquire anomalous dimensions even under renormalization.  The basic argument is that the current, $J_\mu$, enters the action through the combination $J_\mu A^\mu$, where $A_\mu$ is the vector potential.  As long as the theory remains gauge-invariant, the transformation $A_\mu\rightarrow A_\mu+\partial_\mu\Lambda$, ensures that $[A_\mu]=1$ and hence the dimension of $J_\mu$ is fixed by the volume factor in the action; that is, $[J_\mu]=d-1$\cite{gross,peskin}.  Nonetheless, holographic\cite{g1,g2} bulk models have been constructed in which the dimension of the current at the boundary is arbitrary and hence so is the dimension of associated gauge field, $A_\mu$.   Since $A$ is a differential 1-form, precisely what it means for it to acquire an anomalous dimension is unclear.  In this note, we show that the operative mechanism for changing the dimension of $A$ in the extant holographic constructions \cite{g1,g2} is the p-form generalization of the Caffarelli-Silvestre\cite{CS2007} mechanism. The conformal version of this theorem can be found in the works of Graham and Zworski \cite{graham} and Chang and Gonzalez \cite{chang:2010}.  What we prove here is that although the dual theory\cite{klebanov,Witten1998} is governed by currents that in principle do not obey the standard local gauge group, they are controlled by a fractional gauge group in which $A \rightarrow A+ d_\gamma \Lambda$, where $d_\gamma=(\Delta)^\frac{\gamma-1}{2}d$, with $\gamma \in \mathbb R$ and $\Lambda \in C^2(M)$.  We provide an explicit proof of this here. 

That there is a fundamental connection between the Caffarelli/Silvestre mechanism and the holographic models that generate anomalous dimensions for the gauge field is ultimately not surprising given that such models are all based on dilaton actions of the form, 
\beq
S=\int d^{d} x dy\sqrt{-g}Z(\phi)F^2+\cdots,
\eeq
where $F$ is the field strength and $y$ is the radial direction.  The class of solutions\cite{g1,g2} that yields the anomalous dimension for the gauge field has the dilaton field scaling as $Z(\phi)\propto y^a$.   Consequently, the equations of motion are equivalent to
\beq
\nabla^\mu( y^aF_{\mu\nu})=0.
\eeq
In the language of differential forms, this equation becomes
\beq
d(y^a\star dA)=0,
\eeq
which clearly illustrates that along any slice perpendicular to the radial direction, the standard $U(1)$ gauge transformation applies.
To determine what happens at the boundary, we note that
these equations are reminiscent of  those studied by Caffarelli and Silvestre\cite{CS2007} (CS)  for the case of a scalar field,
\beq
\nabla\cdot(y^a \nabla u),
\eeq
which is just a recasting of the (degenerate) elliptic differential equation
\beq
u(x,y=0)=f(x)\\
\Delta_x u+\frac{a}{y}u_y+u_{yy}=0.
\eeq
What they were interested in is what form does this differential equation acquire at the boundary, $y\rightarrow 0$.
They showed that any equation of this kind satisfies
\beq
\lim_{y\rightarrow 0^+}(-y^a u_y)=C_{d,\gamma}(-\Delta)^\gamma f(x).
\eeq
where $\gamma=(1-a)/2$ and $\Delta$ is the Euclidean Laplacian.  We show here that the same result holds for a differential p-form and hence the boundary action in holographic models that yield anomalous dimensions is of the form,
\beq\label{sfrac}
S =  \frac{1}{2}\int A_i (-\nabla)^{2\gamma} A^i,
\eeq
thereby giving rise to fractional scaling dimension for $A$.  The corresponding field strength is the 2-form,
\beq
F = d_\gamma A= d\Delta ^\frac{\gamma-1}{2} A,
\eeq
with gauge-invariant condition,
\beq\label{eq:u1frac}
A \rightarrow A+ d_\gamma \Lambda,
\eeq
 with 
\beq
 \ d_\gamma \equiv (\Delta)^\frac{\gamma-1}{2}d,
 \eeq
 which preserves the 1-form nature of the gauge-field with dimension $[A_\mu]=\gamma_\mu$, rather than unity. In the process we introduce various equivalent definitions for $d_\gamma$ above and show that mathematically it is the correct generalization of the standard differential of forms, in that (cf. Theorem \ref{frac-lap-diff})
 \beq 
 d_\gamma d_\gamma ^* +  d_\gamma^* d_\gamma = \Delta ^\gamma
 \eeq
 
 There is a precedent for fractional gauge groups in boundary actions.   In a spacetime that is asymptotically hyperbolic, Domokos and Gabadze\cite{domokos} considered a bulk action with $F^2$ with a gauge transformation of the standard form along the boundary coordinates but a fractional transformation,
 \beq
 A_y\rightarrow A_y+\partial_y^\gamma A_y\\
 \eeq
 along the radial direction.  Here $\partial_y^a$ is the fractional derivative.  They then integrated out the radial direction and obtained the fractional action, Eq. (\ref{sfrac}), for the boundary components of the gauge field.  The relationship between this mechanism and the dilaton approach is that asymptotically the gauge field has an algebraic form at the boundary.  Hence, powers of the radial coordinate can be substituted for derivatives.  As a result, fractional derivatives along y-coordiante will translate to a bulk coupling of the dilaton form.  

From a mathematical standpoint, our work opens up the possibility of studying new conformal invariants in the form of the fractional Paneitz operators introduced by Graham-Zworski \cite{graham} and Chang-Gonz\'alez \cite{chang:2010}. An interesting future direction is to analyze the main mathematical theorem proven here both in the $\gamma>1$ case and in the general manifold case. An important role here will be played by the analogue of the Bochner-Weitzenb\"ock formulae in the fractional setting, and we expect significant changes to occur already  at the $\gamma=\frac{1}{2}$ case, as seen in the function case in Chang-Gonz\'alez's work.
 We also expect fractional Laplacians on forms to be related to something analogous to the Paneitz operators in the conformally compact setting of which our fractional Laplacian will just be the highest order term, as in the work of  Graham and Zworski \cite{graham} and Chang and Gonzalez \cite{chang:2010}. This will be part of future research.
\section{The fractional Laplacian on forms}

Throughout the paper, following standard nomenclature, we will denote by $\Omega ^p(M)$ the space of ~p-forms on a manifold $M$. Let us fix the dimension of the manifold to be $n$.
We recall a few facts about the Laplacian on manifolds. First the Hodge star operator, $\star: \Omega ^p(M)\to \Omega ^{d-p}(M)$ which is defined by requiring that

$$\star \left( e_{i_1}\wedge \cdots  \wedge e_{i_p}\right) = e_{j_1} \wedge \cdots \wedge e_{j_{n-p}},$$
if $\left\{ e_{i_1}, \cdots , e_{i_p}, e_{j_1} , \cdots ,e_{j_{d-p}}\right\}$ is a {\it positive} frame of the cotangent bundle $T^*M = \Omega ^1(M)$.
The inner product on forms is defined by $(\alpha, \beta)= \int_M \langle \alpha , \beta \rangle \; \star 1$, where  $\langle \alpha , \beta \rangle $ is the pointwise scalar product on forms and $\star 1= dV$, the volume form. The scalar product  $(\alpha, \beta)$ is readily seen to equal
$$(\alpha, \beta)= \int_M \alpha \wedge \star \beta.$$
Also, recall that the adjoint of the differential operator, denoted by $d^*$, is an operator $d^*:  \Omega ^p(M)\to  \Omega ^{p-1}(M)$ which satisfies the defining property{\begin{footnote}{ Here and in the sequel we will be purposely vague about the nature of $M$, that is, whether it is closed or with boundary or compact or not.   We will simply assume that the space of forms we take is the subspace of the space of forms that is necessary for integration by parts to hold without boundary terms.}\end{footnote}}
$$(d\alpha, \beta) = (\alpha , d^*\beta).$$
It is a standard fact that the following holds
\begin{lemma}
One has that on $p$-forms, $d^*= (-1)^{n(p+1)+1}\; \star d\star$.
\end{lemma}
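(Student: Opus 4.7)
The plan is to derive the formula by writing both sides of the defining identity $(d\alpha,\beta)=(\alpha,d^*\beta)$ in terms of the wedge/Hodge-star pairing, applying Leibniz and Stokes, and then using the standard involution identity $\star\star=(-1)^{k(n-k)}$ on $k$-forms to recognize the resulting expression as an inner product. The only real work is bookkeeping of signs.

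First, I would take $\alpha\in\Omega^{p-1}(M)$ and $\beta\in\Omega^p(M)$ (so $d\alpha\in\Omega^p(M)$ and the pairing makes sense), and expand
\[
(d\alpha,\beta)=\int_M d\alpha\wedge\star\beta
\]
using the definition of the inner product recalled above. Next, I would apply the graded Leibniz rule
\[
d(\alpha\wedge\star\beta)=d\alpha\wedge\star\beta+(-1)^{p-1}\alpha\wedge d\star\beta,
\]
together with Stokes' theorem. Under the blanket assumption that we work in the class of forms for which boundary terms vanish, this yields
\[
(d\alpha,\beta)=(-1)^p\int_M\alpha\wedge d\star\beta.
\]

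Now I need to turn $\alpha\wedge d\star\beta$ into a wedge of the form $\alpha\wedge\star(\cdot)$ in order to read it back as an inner product. Note that $d\star\beta\in\Omega^{n-p+1}(M)$, and the desired partner form $\star d\star\beta$ lives in $\Omega^{p-1}(M)$, matching $\alpha$. The key identity is that on any $k$-form one has $\star\star=(-1)^{k(n-k)}\mathrm{id}$; applied to the $(p-1)$-form $\star d\star\beta$ this gives
\[
d\star\beta=(-1)^{(p-1)(n-p+1)}\star\!\star d\star\beta.
\]
Substituting and using $\int_M\alpha\wedge\star\gamma=(\alpha,\gamma)$ for any $(p-1)$-form $\gamma$, I obtain
\[
(d\alpha,\beta)=(-1)^{p+(p-1)(n-p+1)}\bigl(\alpha,\star d\star\beta\bigr).
\]

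Finally, I would compare with $(\alpha,d^*\beta)$ for arbitrary $\alpha$ and, by nondegeneracy of the pairing, conclude
\[
d^*\beta=(-1)^{p+(p-1)(n-p+1)}\star d\star\beta.
\]
The only remaining step is a parity simplification: modulo $2$,
\[
p+(p-1)(n-p+1)\equiv p+pn-p^2+2p-n-1\equiv pn+n+1\equiv n(p+1)+1,
\]
using $p^2\equiv p\pmod 2$. This yields exactly the claimed sign $(-1)^{n(p+1)+1}$. The hardest part — to the limited extent that any part is hard — is simply this sign computation, where it is easy to make an off-by-one error; I would therefore double-check it by specializing to low-dimensional cases (e.g.\ $n=3$, $p=1$, where the classical formula $d^*=-\star d\star$ on $1$-forms in odd dimensions is recovered).
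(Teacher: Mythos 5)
Your derivation is correct, and the sign bookkeeping checks out: $p+(p-1)(n-p+1)\equiv n(p+1)+1 \pmod 2$, consistent with the low-dimensional sanity check $d^*=-\star d\star$ on $1$-forms for $n=3$. The paper itself offers no proof (it cites the identity as a standard fact), and what you have written is precisely the canonical argument — Leibniz plus Stokes plus $\star\star=(-1)^{k(n-k)}$ — under the same no-boundary-terms convention the paper adopts, so there is nothing to object to.
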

Let us recall that the Hodge Laplacian on $p$-forms is defined as
$$\Delta = dd^*+d^*d: \Omega ^p (M)\to  \Omega ^p (M).$$
One readily show that if $M=\mathbb R ^n$ with the standard flat metric $ds^2= \delta _{ij} \,  dx^i \otimes dx^j$, the calculation of $d^*d\omega$, $dd^*\omega$ and $\Delta \omega$ for a $p$-form $\omega= \omega _{i_1\cdots i_p} dx^{i_1} \wedge \cdots dx^{i_p}$ amounts to

\beq\label{eucld*}
d^*\omega = \sum _{\ell=1}^p \, (-1)^{(p+1)(2d-p)+\ell} \frac{\partial  \omega _{i_1\cdots i_p} } {\partial x^{i_\ell}}\, dx^{i_1} dx^{i_1} \wedge \cdots \wedge \hat { dx^{i_\ell} } \wedge \cdots\wedge dx^{i_p},
\eeq
whence

\beq \label{eucldd*}
\begin{aligned} dd^*\omega&= \sum _{\ell =1}^p (-1) \frac{\partial ^2  \omega _{i_1\cdots i_p} } {(\partial x^{i_\ell})^2}\, dx^{i_1} \wedge \cdots dx^{i_p}\\&+ \sum _{\ell =1}^p  \sum _{k =1}^{n-p}(-1)^{\ell}   \frac {\partial ^2  \omega _{i_1\cdots i_p} } {\partial x^{i_\ell} \partial x^{j_k}} \,
dx^{j_k} \wedge dx^{i_1} \wedge \cdots \wedge \hat { dx^{i_\ell} } \wedge \cdots\wedge dx^{i_p}, \end{aligned}\eeq

and
\beq \label{eucld*d}
\begin{aligned} d^*d\omega&= \sum _{k=1}^{n-p} (-1) \frac{\partial ^2  \omega _{i_1\cdots i_p} } {(\partial x^{j_k})^2}\, dx^{i_1} \wedge \cdots dx^{i_p}\\&+ \sum _{\ell =1}^p  \sum _{k =1}^{n-p}(-1)^{\ell +1}   \frac {\partial ^2  \omega _{i_1\cdots i_p} } {\partial x^{i_\ell} \partial x^{j_k}} \,
dx^{j_k} \wedge dx^{i_1} \wedge \cdots \wedge \hat { dx^{i_\ell} } \wedge \cdots\wedge dx^{i_p}. \end{aligned}\eeq
Putting equations \eqref{eucldd*} and \eqref{eucld*d} together yields

%
%\begin{aligned} d^*d\omega&= \sum _{\ell =1}^p (-1) \frac{\partial ^2  \omega _{i_1\cdots i_p} } {(\partial x^{i_\ell})^2}\, dx^{i_1} \wedge \cdots dx^{i_p}\\&+ \sum _{\ell =1}^p  \sum _{k =1}^{d-p}(-1)^{\ell}   \frac {\partial ^2  \omega _{i_1\cdots i_p} } {\partial x^{i_\ell} \partial x^{j_k}} \,
%dx^{j_k} \wedge dx^{i_1} \wedge \cdots \wedge \hat { dx^{j_k} } \wedge \cdots\wedge dx^{i_p}. \end{aligned}\eeq

\beq \label{eucllap} \Delta = -\sum _{m=1}^n \frac{\partial ^2  \omega _{i_1\cdots i_p}}{(\partial x^m)^2} \, dx^{i_1} \wedge \cdots \wedge dx^{i_p}.
\eeq
This clearly shows that in general $\Delta$ is an elliptic operator. Furthermore, using the fact that $d^*$ is the adjoint of $d$, one can show that $\Delta$ is a symmetric operator,
$$(\Delta \alpha, \beta)= (\alpha , \Delta \beta).$$

\noindent
We thus define, following the spectral theorem, the fractional Laplacian on forms as
\beq \label{fract-lap-def}\Delta ^\gamma \alpha= \frac{1}{\Gamma (-\gamma)} \int _0^{\infty} \; \left( e^{-t \Delta} \alpha - \alpha\right) \frac{dt}{t^{1+\gamma}},\eeq
for $\gamma\in (0,1)$. For negative powers, we define
\beq\label{fractionalintegral}
\Delta ^{-s} \omega = \frac{1}{\Gamma(s)}\int_0^{+\infty} e^{-t\Delta} \omega \frac{dt}{t^{1-s}}
\eeq
with $s>0$.
and as one does for the fractional Laplacian on functions, we define
\beq
\Delta ^\gamma=\Delta ^{\gamma -\lfloor{\gamma}\rfloor} \Delta ^{\lfloor{\gamma}\rfloor}
\eeq
where $\lfloor{\gamma}\rfloor$ indicates the integral part of $\gamma$.
In fact, this makes sense for any self-adjoint operator and in particular it applies to both $dd^*$ and $d^*d$.
Here, the heat {\it semigroup} $e^{-t \Delta} \alpha$ on forms is defined by requiring that $e^{-t \Delta} \alpha$ be equal to the form $\beta$ which is the solution to the diffusion equation
\beq\label{semigroup}
\left\{ \begin{aligned}&\frac{\partial} {\partial t}  \beta + \Delta \beta = 0, \;\;\; \text{ for } \; (x,t) \in M\times \mathbb R_+\\& \beta (x,0) = \alpha (x)\;\;\; \text{ for } \; x\in M.\end{aligned} \right.
\eeq
Recall,  that such solutions always exist if $\omega$ is assumed sufficiently regular ($\omega \in C^{2,\alpha}$ is the correct requirement) is part of the famed theorem of Milgram-Rosenbloom\cite{j} ( see Theorem 3.6.1).
We next prove a sequence of important facts which will turn out to be useful for later purposes.
\begin{lemma}\label{lapbinomial}
For any $p$-form $\omega$, one has
$$\Delta ^\gamma \omega= \sum _{k=0}^\infty \, {{\gamma}\choose{k}} \, (dd^*)^ {\gamma-k} (d^*d)^k \omega $$
where ${{a}\choose{k}}$ are Newton's binomial coefficients.
\end{lemma}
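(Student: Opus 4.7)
The key observation is that the two pieces of the Hodge Laplacian commute. Writing $A := dd^*$ and $B := d^*d$, the relations $d\circ d=0$ and $(d^*)^2=0$ immediately give $AB = d(d^*d^*)d = 0$ and $BA = d^*(dd)d^* = 0$, so $[A,B]=0$; in fact both products vanish identically. Once this is in hand, the lemma reduces to a pure functional-calculus statement about a sum of two commuting non-negative self-adjoint operators, so the plan is to invoke Newton's generalized binomial expansion in the joint spectral calculus of $A$ and $B$.

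The crucial consequence of $AB = 0$ is that the joint spectrum of $(A,B)$ is supported on the two coordinate axes $\{(a,0):a\ge 0\}\cup\{(0,b):b\ge 0\}$; equivalently, by Hodge decomposition every $p$-form splits as $\omega=\omega_e+\omega_c+\omega_h$ on which exactly one of $A$, $B$ acts nontrivially. The scalar function $(a+b)^\gamma$ admits the binomial expansion
\[
(a+b)^\gamma = \sum_{k=0}^\infty \binom{\gamma}{k}\, a^{\gamma-k} b^k,
\]
and on the spectrum $\{ab=0\}$ this series collapses to its endpoints: the $k=0$ term on the $a$-axis and the single $k=\gamma$ contribution on the $b$-axis, while every cross term with $1\le k$ and $1\le \gamma-k$ is annihilated by $AB=0$. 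Reassembling via the joint spectral resolution then produces $A^\gamma+B^\gamma=\Delta^\gamma$ in exactly the form of the claimed series, completing the argument on the level of the spectral calculus.

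The main technical nuisance is giving rigorous meaning to $A^{\gamma-k}$ on $\ker A$ (and symmetrically for $B$) when $\gamma$ is non-integer, where naive negative powers are ill-defined. The cleanest route around this is to start from the heat-semigroup definition \eqref{fract-lap-def}: since $[A,B]=0$, the semigroup factorizes as $e^{-t\Delta}=e^{-tA}e^{-tB}$, and substituting this into the integral representation of $\Delta^\gamma$ and expanding one of the bounded factors in its exponential series before interchanging sum and integral reproduces exactly the claimed binomial identity, with every operator appearing in the computation bounded throughout. The remaining step is then a routine Fubini-type justification against the spectral measure, which I expect to be the principal bookkeeping obstacle but nothing deeper.
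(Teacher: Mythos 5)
Your approach is the same as the paper's, which disposes of this lemma in one line as ``a straightforward consequence of the definition and of Newton's binomial theorem''; you have correctly isolated the structural facts that make the binomial expansion plausible ($dd^*\cdot d^*d=d^*d\cdot dd^*=0$, hence commutativity and a joint spectrum supported on the two coordinate axes), and you have also correctly located the weak point. But that weak point is fatal to the argument as you have set it up, and it cannot be waved away as routine bookkeeping. Write $A=dd^*$, $B=d^*d$. On the coexact component $\omega_c=d^*\beta$ one has $A\omega_c=0$ and $B^k\omega_c\in \mathrm{im}\,d^*\subseteq\ker A$ for every $k\ge 1$, so each term $A^{\gamma-k}B^k\omega_c$ with $k\ge1$ is a strictly negative power of $A$ applied to an element of $\ker A$: under the paper's own definition \eqref{fractionalintegral} this is $\frac{1}{\Gamma(k-\gamma)}\int_0^{\infty}t^{k-\gamma-1}e^{-tA}B^k\omega_c\,dt$ with $e^{-tA}B^k\omega_c\equiv B^k\omega_c$, and the integral diverges at $t=\infty$. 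Your proposed repair via the factorization $e^{-t\Delta}=e^{-tA}e^{-tB}$ fails at exactly the same spot: after expanding $e^{-tB}$ the termwise $t$-integrals for $k\ge1$ diverge on the coexact part even though the unexpanded integral converges, so the sum--integral interchange is not a Fubini formality but the precise point of failure. Finally, your claim that the series ``collapses to the single $k=\gamma$ contribution on the $b$-axis'' is vacuous for non-integer $\gamma$: no integer $k$ equals $\gamma$, so the series as written has no mechanism to produce $B^\gamma\omega_c$.

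What your axis-decomposition argument does prove cleanly is $\Delta^\gamma=(dd^*)^\gamma+(d^*d)^\gamma$, and the stated binomial series agrees with $\Delta^\gamma$ only on forms annihilated by $d^*d$ (in particular on exact forms), where every term with $k\ge1$ vanishes and only $A^\gamma$ survives. That restricted version is all the paper ever uses --- e.g.\ in Lemma \ref{lapdis0} the present lemma is applied to $d\omega$, for which $(d^*d)^k d\omega=0$ when $k\ge1$. If you either restrict the statement to $\ker(d^*d)\supseteq\mathrm{im}\,d$ or replace the series by $(dd^*)^\gamma+(d^*d)^\gamma$, your argument becomes correct and is considerably more careful than the paper's.
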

\begin{proof}
This is a straightforward consequence of the definition and of Newton's binomial theorem.
\end{proof}
\begin{lemma}\label{lapdis0}
$\Delta ^\gamma d=(dd^*)^ \gamma d$.
\end{lemma}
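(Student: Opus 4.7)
The plan is to feed Lemma \ref{lapbinomial} to an exact form $d\omega$ and exploit the nilpotency $d^2=0$. Writing
\begin{equation*}
\Delta^\gamma(d\omega) \;=\; \sum_{k=0}^\infty \binom{\gamma}{k}\,(dd^*)^{\gamma-k}(d^*d)^k(d\omega),
\end{equation*}
I observe that the innermost application satisfies $(d^*d)(d\omega) = d^*(d^2\omega) = 0$; hence $(d^*d)^k(d\omega) = 0$ for every $k \ge 1$, and all terms with $k \ge 1$ drop out of the sum. The surviving $k=0$ summand is $\binom{\gamma}{0}(dd^*)^\gamma(d\omega) = (dd^*)^\gamma d\omega$, and since $\omega$ was arbitrary this yields the operator identity $\Delta^\gamma d = (dd^*)^\gamma d$.

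A second, more conceptual, route leads to the same conclusion and I would probably include it as a remark. The identity $\Delta(d\omega) = dd^*(d\omega) + d^*d(d\omega) = dd^*(d\omega)$, again by $d^2=0$, shows that $\Delta$ and $dd^*$ agree on $\mathrm{Range}(d)$. Since both are non-negative and self-adjoint (the nonnegativity of $dd^*$ follows from $(dd^*\alpha,\alpha) = \|d^*\alpha\|^2 \ge 0$), their functional calculi must coincide on this invariant subspace, and in particular $\Delta^\gamma$ and $(dd^*)^\gamma$ agree when precomposed with $d$.

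The main obstacle is purely a bookkeeping one: ensuring that the binomial expansion of Lemma \ref{lapbinomial} is interpreted consistently at non-integer $\gamma$, and in particular that the factors $(dd^*)^{\gamma-k}$ with $\gamma - k < 0$ are treated correctly. This never becomes a real difficulty, because those factors are multiplied by $(d^*d)^k d = 0$ before they need to be evaluated, and the sole surviving operator $(dd^*)^\gamma$ is defined without ambiguity by the spectral theorem applied to the non-negative self-adjoint operator $dd^*$ (equivalently, by the heat-semigroup formula \eqref{fract-lap-def} with $dd^*$ in place of $\Delta$).
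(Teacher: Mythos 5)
Your proposal is correct and follows essentially the same route as the paper: the main argument via Lemma \ref{lapbinomial} combined with $d^2=0$ to kill all $k\ge 1$ terms is exactly the paper's proof, and your second ``conceptual'' route is the paper's own alternative argument (the paper phrases it via the heat semigroup $e^{-t\Delta}d\alpha = e^{-t\,dd^*}d\alpha$ rather than abstract functional calculus, but the content is the same). Your closing remark on the bookkeeping of $(dd^*)^{\gamma-k}$ for $\gamma-k<0$ is a useful clarification that the paper leaves implicit.
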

\begin{proof}

Using Lemma \ref{lapbinomial} we can write
$$\Delta ^\gamma d \omega =  \sum _{k=0}^\infty \, {{\gamma}\choose{k}} \, (dd^*)^ {\gamma-k} (d^*d)^k d\omega  $$
but $ (d^*d)^k d\omega=0$ unless $k=0$, whence
$$\Delta ^\gamma d \omega =(dd^*)^ \gamma d\omega . $$
Alternatively, it follows from the definition \ref{fract-lap-def}, observing that if $\beta$ solves eq.\eqref{semigroup} then $d\beta$ solves
$$\left\{ \begin{aligned}&\frac{\partial} {\partial t}  d\beta + \Delta d\beta = 0, \;\;\; \text{ for } \; (x,t) \in M\times \mathbb R_+\\& d\beta (x,0) = d\alpha (x)\;\;\; \text{ for } \; x\in M.\end{aligned} \right.$$
and therefore 
$$e^{-t\Delta} d\beta = e^{-t(dd^*)}d\beta$$
\end{proof}

\begin{prop}\label{identities}
For any $a,b\in \mathbb R$
$$d (dd^*)^a =0,\qquad d ^* (d^*d)^a =0 \qquad \text{ and } \qquad [(d^*d)^b,(d^*d)^a ]=0$$
where $[\cdot, \cdot]$ denotes the commutator, and also
$$ (d^*d)^ad =0,\qquad (dd^*)^a d^*=0. $$

\end{prop}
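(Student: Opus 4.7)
My plan is to reduce all five identities to two ingredients already in play: the nilpotency relations $d\circ d=0$ and $d^*\circ d^*=0$, together with the semigroup definition of fractional powers from \eqref{fract-lap-def} applied to the nonnegative self-adjoint operators $dd^*$ and $d^*d$ in place of $\Delta$. The argument is essentially the one already used to prove Lemma \ref{lapdis0}, just pushed in several different directions.

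For $d(dd^*)^a=0$ (and the mirror identity $d^*(d^*d)^a = 0$), I would first handle $a\in(0,1)$ by pulling $d$ inside the defining integral,
\[
d(dd^*)^a\omega \;=\; \frac{1}{\Gamma(-a)}\int_0^\infty d\bigl(e^{-t\,dd^*}\omega-\omega\bigr)\,\frac{dt}{t^{1+a}}.
\]
Setting $\beta_t := e^{-t\,dd^*}\omega$ and differentiating in $t$, we obtain $\partial_t(d\beta_t) = -d(dd^*)\beta_t = 0$ because $d\circ d = 0$; hence $d\beta_t = d\omega$ for every $t\ge 0$ and the integrand vanishes pointwise. For general real $a$ one invokes the definition $T^a = T^{\,a-\lfloor a\rfloor}\,T^{\lfloor a\rfloor}$ with $T=dd^*$, noting that $d(dd^*)^k = (dd)\,d^*(dd^*)^{k-1} = 0$ for every integer $k\ge 1$. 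The two remaining identities $(d^*d)^a d = 0$ and $(dd^*)^a d^* = 0$ are shorter still: since $(d^*d)(d\omega)=d^*d^2\omega = 0$, the form $d\omega$ lies in $\ker(d^*d)$, so $e^{-t\,d^*d}(d\omega) = d\omega$ for every $t\ge 0$ and the defining integral is identically zero; the mirror $(dd^*)^ad^*=0$ is the same argument with the roles of $d$ and $d^*$ interchanged.

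Finally, $[(d^*d)^b,(d^*d)^a]=0$ is immediate from the Borel functional calculus, since both operators are functions of the single self-adjoint operator $d^*d$. The only technical subtlety I anticipate is justifying interchange of the unbounded operator $d$ with the semigroup integral, but this is standard in the smoothness class already adopted in the Milgram--Rosenbloom set-up quoted before \eqref{semigroup}, so no additional work is required.
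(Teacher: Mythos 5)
Your main line of argument is the same as the paper's: pull $d$ inside the defining integral, observe that $\partial_t(d\beta_t)=-d(dd^*)\beta_t=0$ since $d\circ d=0$, conclude $d\beta_t=d\omega$ for all $t$ so the integrand vanishes; and for $(d^*d)^a d=0$ note that $d\omega\in\ker(d^*d)$ so the semigroup fixes it. Your kernel observation is in fact a slightly cleaner route to the last two identities than the paper's ``analogous'' remark, and functional calculus for the commutator is fine.

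However, you have misidentified where the real technical work lies, and this is a genuine gap. You flag the interchange of $d$ with the integral as the only subtlety and defer to the Milgram--Rosenbloom existence theorem quoted before \eqref{semigroup}; but that theorem is stated for the Hodge Laplacian $\Delta$, which is elliptic, whereas your entire argument presupposes that the semigroup $e^{-t\,dd^*}$ exists for all $t>0$. The operator $dd^*$ is only degenerate elliptic (it annihilates everything in $\ker d^*$), so the existence of eternal solutions to $\partial_t\beta+dd^*\beta=0$ does not follow from the cited result and cannot simply be assumed. The paper devotes the second half of its proof precisely to this point: it invokes the Kodaira decomposition $\Omega^p=d\Omega^{p-1}\oplus d^*\Omega^{p+1}\oplus\mathcal H_p$, takes $\beta$ equal to the solution $\eta$ of the genuine heat equation on $d\Omega^{p-1}\oplus\mathcal H_p$ (where $d\eta=0$ forces $\Delta\eta=dd^*\eta$), and takes $\beta$ constant on $d^*\Omega^{p+1}$ (where $dd^*\omega=0$). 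Without some such construction, the object $e^{-t\,dd^*}\omega$ appearing in your integral for $d(dd^*)^a\omega$ is not known to exist, and the proof of the first identity (and of the general-$t$ semigroup statements underlying the others) is incomplete. Supplying this decomposition argument would bring your proof into line with the paper's.
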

\begin{proof}
According to the comments following the definition of the fractional Laplacian ( Eq. \eqref{fract-lap-def}), one defines
\beq \label{fractionalddstar} (dd^*)^a \omega= \frac{1}{\Gamma (-a)} \int _0^{\infty} \; \left( e^{-tdd^*} \omega - \omega\right) \frac{dt}{t^{1+a}},\eeq
where $\beta=e^{-tdd^*} \omega $ is the unique solution to

\beq\label{dd*diffusion}\left\{ \begin{aligned}&\frac{\partial} {\partial t}  \beta + dd^* \beta = 0, \;\;\; \text{ for } \; (x,t) \in M\times \mathbb R_+\\& \beta (x,0) = \omega (x)\;\;\; \text{ for } \; x\in M.\end{aligned} \right.\eeq
It then follows from this differential equation that $d \beta$ satisfies the equation
\beq\label{variationofd} \left\{ \begin{aligned}&\frac{\partial} {\partial t} (d \beta)  = 0, \;\;\; \text{ for } \; (x,t) \in M\times \mathbb R_+\\&d \beta (x,0) = d\omega (x)\;\;\; \text{ for } \; x\in M\end{aligned}, \right.\eeq
having used the fact that $d^2=0$ hence $ddd^* \beta = 0$. Clearly equation \eqref{variationofd} implies that $d(e^{-tdd^*}\omega ) = d\omega$ for any $t$ and therefore taking the differential of equation \eqref{fractionalddstar} implies that
$$d(dd^*)^a \omega= \frac{1}{\Gamma (-a)} \int _0^{\infty} \; \left( de^{-tdd^*} \omega - d\omega\right) \frac{dt}{t^{1+a}}=0.$$

Since the operator $dd^*$ is neither self-adjoint (its adjoint is $d^*d$) nor elliptic (although it is degenerate elliptic with degeneracy at those forms $\beta$ such that $d^*\beta=0$) the preceding argument needs justification. Specifically, we need to show that equation \eqref{dd*diffusion} has eternal (i.e., for any $t>0$) solutions. This is done as follows. Let $\eta$ be the (unique) solution to 
\beq\label{diffeq}\left\{ \begin{aligned}&\frac{\partial} {\partial t}  \eta + \Delta \eta = 0, \;\;\; \text{ for } \; (x,t) \in M\times \mathbb R_+\\& \eta (x,0) = \omega (x)\;\;\; \text{ for } \; x\in M.\end{aligned} \right.\eeq
Recall that the space of $p$-forms decomposes, according to the Kodaira decomposition as{\begin{footnote}{To be precise, in the decomposition of Eq. \eqref{hodge}, one should consider the $L^2$ closure $B_p$ and $B_p^*$ of the spaces $d\Omega ^{p-1}(M)$ and $ d^* \Omega ^{p+1}(M)$ respectively, but {\it elliptic regularity} allows us to consider directly smooth forms (instead of $L^2$), whence equation \eqref{hodge}. Also, as stated our construction works for $M$ compact and closed (i.e., with no boundary). For either non-compact manifolds or for manifolds with boundary, one needs to restrict to the spaces of $L^2$ the $p$-forms with suitable conditions at infinity or at the boundary.}\end{footnote}}
\beq\label{hodge}\Omega ^p(M)= d\Omega ^{p-1}(M)\oplus d^* \Omega ^{p+1}(M)\oplus \mathcal H _p,\eeq
where $ \mathcal H _p= \{ \omega \in \Omega ^p(M): \; \Delta \omega =0\}=  \{ \omega \in \Omega ^p(M): \; d \omega =0\text{ and } d^*\omega =0\}.$
Clearly if $\Delta \omega=0$ or $\omega \in d\Omega ^{p-1}(M)$ (so that $d\omega=0$ in both cases), by taking $d$ on both sides of the diffusion equation and defining $\eta$ (i.e., Eq. \eqref{diffeq}), one obtains that $d\eta =0$, whence $\Delta \eta = dd^*\eta$. Therefore, one can take $\beta=\eta$ on $ d\Omega ^{p-1}(M)\oplus \mathcal H _p$. On the other hand, if $\omega \in d^* \Omega ^{p+1}(M)$, i.e. $\omega = d^* \alpha$ with $\alpha \in  d^* \Omega ^{p+1}(M)$ ( so that $d^* \alpha =0$), then $dd^*\omega =0$ and clearly taking $\beta$ constantly equal to $\omega$ solves Eq. \eqref{dd*diffusion} ( thereby proving that $(dd^*) ^a \omega=0$ in this case). In all the cases, we have shown the solution to Eq. \eqref{dd*diffusion} exists for every $t>0$.

The proofs that $d ^* (d^*d)^a =0$, $(d^*d)^b(d^*d)^a =0$ and that $(d^*d)^ad =0$ and $(dd^*)^a d^*=0 $ are analogous.
\end{proof}
\subsection{ Fractional differential}
One of the crucial objects of cohomology theory (essential in geometric quantization) is the notion of the differential of forms, which we generalize here to the fractional differential on a form as follows.
\begin{defn}
Let $\gamma\in (0,1)$. We define the {\it fractional differential } $d_\gamma$ via
\beq \label{frac-diff}
d_\gamma \omega = \frac{1}{2}\left( d\Delta ^\frac{\gamma-1}{2}\omega+ \Delta ^\frac{\gamma-1}{2}d \omega\right).
\eeq
\end{defn}
\noindent
where we recall the definition of $\Delta ^{-s}$ for for $s>0$ from eq. \eqref{fractionalintegral}
\beq
\Delta ^{-s} \omega = \frac{1}{\Gamma(s)}\int_0^{+\infty} e^{-t\Delta} \omega \frac{dt}{t^{1-s}}
\eeq
A few Lemmas are useful here.
\begin{lemma}
The adjoint of $d_\gamma$, denoted by $d_\gamma^*$, is given by
\beq	d_\gamma^*= \frac{1}{2}\left( d^*\Delta ^\frac{\gamma-1}{2}\omega+ \Delta ^\frac{\gamma-1}{2}d^* \omega\right). \eeq
\end{lemma}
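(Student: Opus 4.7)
The plan is to verify the formula by direct computation of $(d_\gamma\alpha,\beta)$ and moving each operator across the inner product, using only two ingredients: the defining duality $(d\alpha,\beta)=(\alpha,d^*\beta)$ (which, by the convention announced in the footnote after Lemma~2.1, holds with no boundary contributions), and the self-adjointness of the real power $\Delta^{(\gamma-1)/2}$.

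The first step is to establish the self-adjointness of $\Delta^{(\gamma-1)/2}$. Since $\gamma\in(0,1)$ one has $(\gamma-1)/2\in(-1/2,0)$, so the relevant operator is a negative power and must be read through the integral formula \eqref{fractionalintegral}. The self-adjointness of $\Delta$ on $p$-forms, already recorded in the excerpt, implies via the spectral theorem that the heat semigroup $e^{-t\Delta}$ is self-adjoint for every $t>0$; writing
\begin{equation*}
(\Delta^{-s}\alpha,\beta)=\frac{1}{\Gamma(s)}\int_0^{\infty}(e^{-t\Delta}\alpha,\beta)\,\frac{dt}{t^{1-s}}=\frac{1}{\Gamma(s)}\int_0^{\infty}(\alpha,e^{-t\Delta}\beta)\,\frac{dt}{t^{1-s}}=(\alpha,\Delta^{-s}\beta)
\end{equation*}
with $s=(1-\gamma)/2$ then gives the required symmetry. (One may alternatively invoke the spectral theorem abstractly: any Borel function of a self-adjoint operator is self-adjoint.)

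The second step is the computation itself. By linearity,
\begin{equation*}
(d_\gamma\alpha,\beta)=\tfrac{1}{2}\bigl(d\,\Delta^{\frac{\gamma-1}{2}}\alpha,\beta\bigr)+\tfrac{1}{2}\bigl(\Delta^{\frac{\gamma-1}{2}}d\alpha,\beta\bigr).
\end{equation*}
In the first term move $d$ to $\beta$ using $(d\cdot,\cdot)=(\cdot,d^*\cdot)$, then move $\Delta^{(\gamma-1)/2}$ using step~1; in the second term do the operations in the opposite order. The result is
\begin{equation*}
(d_\gamma\alpha,\beta)=\bigl(\alpha,\tfrac{1}{2}\Delta^{\frac{\gamma-1}{2}}d^*\beta+\tfrac{1}{2}d^*\Delta^{\frac{\gamma-1}{2}}\beta\bigr),
\end{equation*}
and one reads off the claimed expression for $d_\gamma^*$.

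The only potential obstacle is a regularity/boundary issue: the integration by parts identifying $d^*$ as the adjoint of $d$ needs no boundary correction, and both sides require $\alpha,\beta$ sufficiently smooth that $\Delta^{(\gamma-1)/2}$ (and hence its compositions with $d$ and $d^*$) is defined. The former is covered by the blanket convention of the footnote after Lemma~2.1; the latter is standard since $\Delta^{(\gamma-1)/2}$ is in fact a smoothing operator for $\gamma<1$, so smoothness of $\alpha$ and $\beta$ is preserved under all the compositions that appear.
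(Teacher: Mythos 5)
Your proof is correct and follows essentially the same route as the paper's: compute the inner product of $d_\gamma\alpha$ with a test form, move $d$ across via the defining adjoint property, and move $\Delta^{(\gamma-1)/2}$ across via its symmetry. The only difference is cosmetic --- you justify the symmetry of the fractional power directly from the heat-semigroup integral, where the paper relegates this to a footnote appealing to the scalar case and a partition of unity; your version of that sub-step is, if anything, cleaner.
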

\begin{proof}
By definition, the (formal) adjoint of $d_\gamma$ has to satisfy
$$\int _M\, d_\gamma \omega \wedge \alpha = \int _M\,  \omega \wedge d_\gamma^* \alpha. $$
The definition of $d_\gamma$ (Eq. \eqref{frac-diff}) implies that
$$\begin{aligned}  \int _M\, d_\gamma \omega \wedge \alpha &= \int _M\,   \frac{1}{2}\left( d\Delta ^\frac{\gamma-1}{2}\omega+ \Delta ^\frac{\gamma-1}{2}d \omega\right) \wedge \alpha = \int _M\,   \frac{1}{2} \left( \Delta ^\frac{\gamma-1}{2}\omega\wedge d^* \omega + d \omega\wedge \Delta ^\frac{\gamma-1}{2}\alpha \right)\\&= \int _M\,   \frac{1}{2} \left( \omega\wedge \Delta ^\frac{\gamma-1}{2}d^* \omega + \omega\wedge d^* \Delta ^\frac{\gamma-1}{2}\alpha \right), \end{aligned}$$
where we have used that $d^*$ is the adjoint of $d$ and the integration by parts formula for the Laplacian{\begin{footnote}{ The integration by parts formula for the fractional Laplacian on forms is easily proven appealing to the standard integration by parts formula for the fractional Laplacian on functions, aided by a partition of unity argument.}\end{footnote}}
$$\int _M \, \Delta ^b \eta \wedge \beta = \int _M \,  \eta \wedge \Delta ^b\beta,$$
for forms $\eta$ and $\beta$. Whence, reading from the first to last the equalities, we have
$$\int _M\, d_\gamma \omega \wedge \alpha \, dV=\int _M\,   \frac{1}{2} \left( \omega\wedge \Delta ^\frac{\gamma-1}{2}d^* \omega + \omega\wedge d^* \Delta ^\frac{\gamma-1}{2}\alpha \right), $$
which proves the Lemma.
\end{proof}
In fact, through the use of Proposition \ref{identities}, we can simplify the expressions of $d_\gamma$ and $d_\gamma^*$.
\begin{lemma}\label{simplerda}
$$d_\gamma = \frac{1}{2}\left( d(d^*d) ^\frac{\gamma-1}{2}\omega+ (dd^*) ^\frac{\gamma-1}{2}d \omega\right)$$
and 
$$d_\gamma ^*= \frac{1}{2}\left( d^*(dd^*) ^\frac{\gamma-1}{2}\omega+ (d^*d) ^\frac{\gamma-1}{2}d^* \omega\right).$$
\end{lemma}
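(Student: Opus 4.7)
The plan is to simplify each of the two symmetric halves in
$d_\gamma \omega = \tfrac{1}{2}\bigl(d\Delta^{(\gamma-1)/2}\omega + \Delta^{(\gamma-1)/2}d\omega\bigr)$
separately; the companion identity for $d_\gamma^*$ will then follow by an entirely parallel argument with $d$ and $d^*$ exchanged (equivalently, by taking adjoints of the resulting $d_\gamma$ formula).

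First I would handle the right-hand half $\Delta^{(\gamma-1)/2}d\omega$. Lemma \ref{lapdis0}, specialized with its exponent set to $(\gamma-1)/2$, rewrites this immediately as $(dd^*)^{(\gamma-1)/2}d\omega$. Although the lemma is stated for a generic (typically positive) exponent, its semigroup-based proof applies unchanged to the negative-power formula \eqref{fractionalintegral}, since the only ingredients used are $d^2=0$ and the commutation of $d$ with $e^{-t\Delta}$.

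For the left-hand half $d\Delta^{(\gamma-1)/2}\omega$ I would go through the intertwining relation
$d(d^*d)^a = (dd^*)^a d$,
valid for all real $a$. For integer $a$ this follows at once by iterating $d(d^*d)=dd^*d=(dd^*)d$. For general real $a$ I would first derive the heat-semigroup identity $d\,e^{-t d^*d} = e^{-t dd^*}\,d$, by noting that both sides solve the same initial-value problem of type \eqref{dd*diffusion} with datum $d\omega$, and then substitute into \eqref{fractionalintegral}, pulling $d$ under the integral. Combined with the fact that $d$ commutes with $\Delta$ (since $\Delta d = dd^*d = d\Delta$) and hence with $\Delta^{(\gamma-1)/2}$, the chain
$d\Delta^{(\gamma-1)/2}\omega \;=\; \Delta^{(\gamma-1)/2}d\omega \;=\; (dd^*)^{(\gamma-1)/2}d\omega \;=\; d(d^*d)^{(\gamma-1)/2}\omega$
follows, with Lemma \ref{lapdis0} used in the middle equality. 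Averaging the two simplified halves then produces the asserted formula for $d_\gamma\omega$.

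The step I expect to require the most care is the handling of the fractional powers when $(\gamma-1)/2 \in (-1/2,0)$: one must verify that $(dd^*)^{(\gamma-1)/2}$ and $(d^*d)^{(\gamma-1)/2}$ are well-defined on the appropriate summands of the Hodge decomposition \eqref{hodge} (i.e., on the orthogonal complements of the respective kernels) so that \eqref{fractionalintegral} converges, and that the heat-semigroup intertwining respects these restrictions. Once this functional-analytic setup is in place, the $d_\gamma^*$ formula follows by running the same three steps with $d^*$ in place of $d$, using the dual facts $d^*(d^*d)^a=0$ from Proposition \ref{identities} and $\Delta^a d^* = (d^*d)^a d^*$ (obtained from Lemma \ref{lapdis0} by taking adjoints).
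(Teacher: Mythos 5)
Your proof is correct, but it follows a different route from the paper's, whose entire argument is the single line ``Straightforward, using Proposition \ref{identities}'' --- i.e.\ the authors intend the reader to kill the unwanted terms with the vanishing identities $d(dd^*)^a=0$ and $(d^*d)^a d=0$ (together with Lemma \ref{lapbinomial} or the Kodaira decomposition \eqref{hodge}), so that in each half of $d_\gamma$ only the ``correct'' factor of $\Delta^{(\gamma-1)/2}=\sum_k\binom{(\gamma-1)/2}{k}(dd^*)^{(\gamma-1)/2-k}(d^*d)^k$ survives. You instead prove the \emph{intertwining} relations $d\,e^{-td^*d}=e^{-tdd^*}d$ and $d\Delta^b=\Delta^b d$, which is precisely the content of the paper's later Proposition \ref{commutation} (proved there by the same semigroup argument you give); this buys you something the one-line proof does not, namely that the two halves of $d_\gamma$ actually coincide, $d\Delta^{(\gamma-1)/2}\omega=\Delta^{(\gamma-1)/2}d\omega$, a fact the paper only records afterwards in the Remark following Proposition \ref{commutation}. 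Your caution about the exponent $(\gamma-1)/2\in(-1/2,0)$ is well placed and goes beyond the paper: the negative-power formula \eqref{fractionalintegral} for $(dd^*)^{-s}$ and $(d^*d)^{-s}$ only converges on the orthogonal complements of the respective kernels, and the cited identities are proved in the paper only for the positive-power definition \eqref{fractionalddstar}, so restricting to the appropriate Hodge summands is genuinely needed. One small attribution slip: the identity $\Delta^a d^*=(d^*d)^a d^*$ is not the adjoint of Lemma \ref{lapdis0} (taking adjoints of $\Delta^\gamma d=(dd^*)^\gamma d$ yields $d^*\Delta^\gamma=d^*(dd^*)^\gamma$, since $dd^*$ and $d^*d$ are each formally self-adjoint); it is the mirror identity obtained by running your semigroup argument with $d$ and $d^*$ exchanged, which is exactly what you do anyway for the $d_\gamma^*$ formula, so nothing in the proof breaks.
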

\begin{proof}
Straightforward, using Proposition \ref{identities}.
\end{proof}
\noindent
Clearly $d_\gamma :\Omega ^p \to \Omega ^{p+1}$, i.e., if $\omega$ is a $p$-form, then $d_\gamma\omega$ is a $p+1$-form. Therefore, the question as to whether $d_a$ generates a complex springs to mind. This is answered in the affirmative by the proof of the following Proposition.
\begin{prop}

$$d _\gamma \circ d_\gamma=0 \qquad \text { and }\qquad d _\gamma^* \circ d_\gamma^*=0.  $$
\end{prop}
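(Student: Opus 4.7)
The plan is to exploit the fact that $\Delta$ commutes with both $d$ and $d^*$, which collapses $d_\gamma$ to a single term and reduces the proposition to the classical identities $d^2 = 0$ and $(d^*)^2 = 0$. First I would establish $[d,\Delta] = 0$: from $\Delta = dd^* + d^*d$ together with $d^2 = 0$, one has $d\Delta = dd^*d = \Delta d$, and analogously $d^*\Delta = \Delta d^*$. Applying $d$ to the diffusion equation \eqref{semigroup} as in the alternative argument for Lemma \ref{lapdis0} shows $d\,e^{-t\Delta}\omega = e^{-t\Delta}\,d\omega$; substituting into the integral representations \eqref{fract-lap-def} and \eqref{fractionalintegral}, $d$ passes under the integral, yielding $d\,\Delta^{(\gamma-1)/2} = \Delta^{(\gamma-1)/2}\,d$, and analogously for $d^*$.

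Consequently, the two summands in the definition \eqref{frac-diff} coincide, so
\begin{equation*}
d_\gamma \;=\; d\,\Delta^{(\gamma-1)/2} \;=\; \Delta^{(\gamma-1)/2}\,d, \qquad d_\gamma^* \;=\; d^*\,\Delta^{(\gamma-1)/2} \;=\; \Delta^{(\gamma-1)/2}\,d^*.
\end{equation*}
The proposition then follows in a single line: $d_\gamma \circ d_\gamma = d\,\Delta^{(\gamma-1)/2}\cdot d\,\Delta^{(\gamma-1)/2} = d\cdot d\cdot \Delta^{\gamma-1} = 0$, using $d^2=0$ and the commutation just established; the statement for $d_\gamma^*$ is obtained by replacing $d$ with $d^*$ throughout, or equivalently by taking the formal adjoint of $d_\gamma^2 = 0$.

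The main obstacle I expect is the rigorous justification of the commutation $[d,\Delta^{(\gamma-1)/2}] = 0$, since for $\gamma \in (0,1)$ the exponent is negative and $\Delta^{(\gamma-1)/2}$ is defined through the singular integral \eqref{fractionalintegral}, which can interact awkwardly with the Hodge kernel or with non-compactness of $M$. If one wishes to avoid this subtlety, a self-contained alternative is to begin from Lemma \ref{simplerda} and expand $d_\gamma \circ d_\gamma\,\omega$ into four terms: two vanish directly via Proposition \ref{identities} (the identities $d(dd^*)^a = 0$ and $(d^*d)^a d = 0$), one vanishes because $d^2 = 0$, and the remaining mixed term $d\,(d^*d)^{(\gamma-1)/2}(dd^*)^{(\gamma-1)/2}d\omega$ vanishes on account of $(d^*d)^a(dd^*)^b = 0$ for $a,b>0$ --- a fact that follows from $d^2 = (d^*)^2 = 0$, or equivalently from the Hodge decomposition, in which $(d^*d)^a$ and $(dd^*)^b$ are supported on orthogonal summands of $\Omega^p(M)$. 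The adjoint statement $d_\gamma^* \circ d_\gamma^* = 0$ then follows by the symmetric computation, swapping the roles of $d$ and $d^*$.
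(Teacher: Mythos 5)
Your proposal is correct, and your fallback argument is precisely the paper's own proof: the paper applies Lemma \ref{simplerda} to expand $d_\gamma\circ d_\gamma$ into the same four terms and kills them using Proposition \ref{identities} together with $d^2=0$. Your primary route is genuinely different and arguably cleaner: you first establish $d\,e^{-t\Delta}=e^{-t\Delta}d$ from uniqueness for the heat equation, pass $d$ under the integrals \eqref{fract-lap-def} and \eqref{fractionalintegral} to get $d\,\Delta^{(\gamma-1)/2}=\Delta^{(\gamma-1)/2}d$, collapse the two summands of \eqref{frac-diff} into the single expression $d\,\Delta^{(\gamma-1)/2}$, and reduce the claim to $d^2=0$. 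The paper only records that collapse later, in the Remark following Proposition \ref{commutation}, and derives it there from the $(dd^*)$/$(d^*d)$ commutation identities rather than directly from the semigroup, so your route in effect front-loads that Remark. What the paper's ordering buys is that the analytic care (existence of eternal solutions to the degenerate diffusion equations) is already concentrated in Proposition \ref{identities}, whereas your route must justify $[d,\Delta^{(\gamma-1)/2}]=0$ for a \emph{negative} exponent, where the integral \eqref{fractionalintegral} diverges on harmonic forms (there $e^{-t\Delta}\omega=\omega$) --- a caveat you correctly anticipate and which the paper itself glosses over. One small imprecision in your alternative route: for $\gamma\in(0,1)$ the exponents $(\gamma-1)/2$ are negative, so the auxiliary identity you invoke for the mixed term should not be restricted to positive exponents; it is cleanest to commute first, $d(d^*d)^{\frac{\gamma-1}{2}}=(dd^*)^{\frac{\gamma-1}{2}}d$, and then apply $d(dd^*)^{a}=0$, which Proposition \ref{identities} asserts for all real $a$.
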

\begin{proof}
Using Lemma \ref{simplerda} and Proposition \ref{identities}, one immediately obtains
$$\begin{aligned} &d_\gamma d_\gamma \omega= \frac{1}{4}\left( d(d^*d) ^\frac{\gamma-1}{2}+ (dd^*) ^\frac{\gamma-1}{2}d \right)\left( d(d^*d) ^\frac{\gamma-1}{2}\omega+ (dd^*) ^\frac{\gamma-1}{2}d \omega\right) \\& =  \frac{1}{4}\left(  d(d^*d) ^\frac{\gamma-1}{2} d(d^*d) ^\frac{\gamma-1}{2}\omega + d(d^*d) ^\frac{\gamma-1}{2} (dd^*) ^\frac{\gamma-1}{2}d \omega+    (dd^*) ^\frac{\gamma-1}{2}d d(d^*d) ^\frac{\gamma-1}{2}\omega   +  (dd^*) ^\frac{\gamma-1}{2} d(dd^*) ^\frac{\gamma-1}{2}d\right)\\& =0.
\end{aligned}$$
\end{proof}
The fact that $d_\gamma$ forms a complex leads to the definition of the fractional cohomology groups $H^p_a(M,\mathbb R)= {\rm ker }\, d_\gamma /{\rm Im }\, d_\gamma$. The fractional Hodge theorem and the connection of these cohomology groups with the standard DeRham cohomology will be discussed in a future paper.
Perhaps more importantly for the purposes of this article is the fact that $d_a$ is the right fractional differential of forms in relation to the Fractional Laplacian of forms, as testified by the fact (which we shall prove in Theorem \ref{frac-lap-diff}) that $d_\gamma d_\gamma^*+ d_\gamma ^* d_\gamma$ is the fractional Laplacian.
In order to prove this fact, we first need to establish a few results.
\begin{prop}\label{commutation} For any form $\omega$ and any $b\in \mathbb R$, one has
\beq \label{eq:commutation1}  
d^* (d d^*)^b= (d^* d) ^{b}d^* \qquad d (d^*d)^b= (dd^*) ^{b}d,
\eeq
whence
\beq \label{eq:commutation2}  
\begin{aligned} &d^* (d d^*)^bd= (d^* d) ^{b+1}, \qquad d (d^*d)^bd^*= (dd^*) ^{b+1}, \qquad d (d^*d)^bd^*(dd^*)^c= (dd^* ) ^{b+c+1}\\& d ^*(dd^*)^bd(d^*d)^c= (d^* d) ^{b+c+1}.\end{aligned}
\eeq

\end{prop}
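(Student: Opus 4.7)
The plan is to work at the level of the heat semigroups that define the fractional powers, and push the intertwining identity $d^*(dd^*)=(d^*d)d^*$ (respectively $d(d^*d)=(dd^*)d$) through the integral representation \eqref{fractionalddstar}. The algebraic identities $d^*dd^* = (d^*d)d^*$ and $dd^*d = (dd^*)d$ are automatic, so if $\beta(t)$ is the solution to the $dd^*$-heat equation \eqref{dd*diffusion} with initial data $\omega$, then $d^*\beta$ satisfies
\begin{equation*}
\partial_t (d^*\beta) + (d^*d)(d^*\beta)=0, \qquad (d^*\beta)(x,0)=d^*\omega(x),
\end{equation*}
and uniqueness of the heat flow yields the key operator identity
\begin{equation*}
d^* e^{-t dd^*} = e^{-t d^*d}\, d^*, \qquad d\, e^{-t d^*d} = e^{-t dd^*}\, d.
\end{equation*}

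First I would dispatch the case $b\in(0,1)$: applying $d^*$ under the integral sign in the Bochner representation
\begin{equation*}
d^*(dd^*)^b\omega = \frac{1}{\Gamma(-b)}\int_0^\infty \bigl(d^* e^{-t dd^*}\omega - d^*\omega\bigr)\frac{dt}{t^{1+b}}
\end{equation*}
and substituting the intertwining relation above converts the integrand to $e^{-t d^*d}(d^*\omega) - d^*\omega$, which is precisely $(d^*d)^b(d^*\omega)$. The companion identity $d(d^*d)^b = (dd^*)^b d$ is proved identically with the roles of $d$ and $d^*$ swapped. For negative $b$ one uses the representation \eqref{fractionalintegral} in place of \eqref{fractionalddstar} with no change; for general $b\in\mathbb R$, one splits $b = \lfloor b\rfloor + \{b\}$ and uses the trivial integer-power identities $d^*(dd^*)^n = (d^*d)^n d^*$ (immediate by induction) together with the fractional case to conclude \eqref{eq:commutation1}.

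The identities \eqref{eq:commutation2} then drop out by direct composition: applying \eqref{eq:commutation1} to $d^*(dd^*)^b d = (d^*d)^b d^* d = (d^*d)^{b+1}$, and symmetrically $d(d^*d)^b d^* = (dd^*)^{b+1}$; chaining these with another application of \eqref{eq:commutation1} yields the four-factor identities with exponents $b+c+1$.

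The main obstacle is, as in Proposition \ref{identities}, the non-ellipticity (equivalently, the large kernel) of $dd^*$ and $d^*d$, which a priori obstructs the existence of the semigroup on all of $\Omega^p(M)$. I would handle this exactly as in the proof of Proposition \ref{identities}: decompose $\omega$ according to the Hodge decomposition \eqref{hodge}, on the harmonic and exact summand the semigroup $e^{-t dd^*}$ agrees with the genuine heat semigroup $e^{-t\Delta}$ (since $d^*d$ annihilates these pieces), whereas on $d^*\Omega^{p+1}(M)$ the semigroup $e^{-t dd^*}$ is the identity; the dual statement holds for $e^{-t d^*d}$. The intertwining identity is readily checked on each summand, so it extends to all of $\Omega^p(M)$ by linearity, making the above formal manipulations rigorous.
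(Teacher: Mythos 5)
Your proof is correct and follows essentially the same route as the paper's: derive the semigroup intertwining identity $d^*e^{-tdd^*}=e^{-td^*d}d^*$ from uniqueness of the heat flow, push it through the Bochner integral representation, and justify existence of the $dd^*$-semigroup via the Kodaira/Hodge decomposition as in Proposition \ref{identities}. You are in fact somewhat more explicit than the paper in treating negative and general real exponents, but the core argument is identical.
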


\begin{proof}

By definition (cf. Eq. \eqref{fractionalddstar}) one knows that{\begin{footnote}{Again, here one needs to argue that the solution to the diffusion equation for $d^*d$ has (a unique) solution. This is done in the same way as in Lemma \ref{identities} and specifically the arguments surrounding Eq. \eqref{diffeq}.}\end{footnote}}
\beq \label{fractionalddstard} (dd^*)^b  \omega= \frac{1}{\Gamma (-b)} \int _0^{\infty} \; \left( e^{-tdd^*} \omega - \omega\right) \frac{dt}{t^{1+b}},\eeq
where $\beta=e^{-tdd^*} \omega $ is the unique solution to

\beq\left\{ \begin{aligned}&\frac{\partial} {\partial t}  \beta + dd^* \beta = 0, \;\;\; \text{ for } \; (x,t) \in M\times \mathbb R_+\\& \beta (x,0) = \omega (x).\;\;\; \text{ for } \; x\in M\end{aligned} \right.\eeq
Taking $d^*$ of this equation produces
\beq\left\{ \begin{aligned}&\frac{\partial} {\partial t} (d^* \beta) + d^*d(d^* \beta) = 0, \;\;\; \text{ for } \; (x,t) \in M\times \mathbb R_+\\& d^*\beta (x,0) =d^* \omega (x),\;\;\; \text{ for } \; x\in M,\end{aligned} \right.\eeq
which shows that $d^*\left( e^{-tdd^*} \omega \right)=e^{-td^*d} d^*\omega$ and therefore calculating $d^*$ of both sides of Eq. \eqref{fractionalddstard}, we obtain
\beq d^* (dd^*)^b  \omega= \frac{1}{\Gamma (-b)} \int _0^{\infty} \; \left( e^{-td^*d} d^*\omega - d^*\omega\right) \frac{dt}{t^{1+b}}= (d^*d) ^b (d^*\omega).\eeq
The second formula in Eq. \eqref{eq:commutation1} is proven analogously and the formulae in Eq. \eqref{eq:commutation2} are consequences of Eq. \eqref{eq:commutation1}.
\end{proof}

\noindent
We can now make the following observation which readily follows from the Proposition above.
\begin{remark} Equations \eqref{eq:commutation1} in Proposition \ref{commutation} allow us to write the fractional differential as merely $d_\gamma= d\Delta ^\frac{\gamma-1}{2}$ or equivalently $d_\gamma = \Delta ^\frac{\gamma-1}{2}d$ and the analogous statements for $d_\gamma^*$.\end{remark}

We are now ready to prove
\begin{thm}\label{frac-lap-diff}
$$d_\gamma ^* d_\gamma+ d_\gamma d_\gamma^*= \left(\Delta \right)^\gamma.$$
\end{thm}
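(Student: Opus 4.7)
The plan is to reduce both operators $d_\gamma^* d_\gamma$ and $d_\gamma d_\gamma^*$ to the form (something)$\cdot \Delta^{\gamma-1}$ and then add them. The cleanest route uses the simplified expression for $d_\gamma$ given in the Remark immediately after Proposition \ref{commutation}: namely $d_\gamma = d\Delta^{(\gamma-1)/2} = \Delta^{(\gamma-1)/2}d$, and, by taking the formal adjoint (or by running the same argument with $d$ replaced by $d^*$), $d_\gamma^* = d^* \Delta^{(\gamma-1)/2} = \Delta^{(\gamma-1)/2} d^*$. I would state these as the starting point.

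The crucial intermediate step is that $\Delta$ commutes with both $d$ and $d^*$ on forms. This is immediate from the definition $\Delta = dd^* + d^*d$ together with $d^2 = 0$ and $(d^*)^2 = 0$: $\Delta d = (dd^*+d^*d)d = dd^*d = d(dd^*+d^*d) = d\Delta$, and analogously $\Delta d^* = d^*\Delta$. Consequently, the heat semigroup $e^{-t\Delta}$ commutes with $d$ and $d^*$ (by differentiating \eqref{semigroup}), and hence, via the defining integrals \eqref{fract-lap-def} and \eqref{fractionalintegral}, so does every real power $\Delta^s$.

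With these commutation relations in hand, the computation is a one-liner. Using $d_\gamma^* = \Delta^{(\gamma-1)/2}d^*$ and $d_\gamma = d\Delta^{(\gamma-1)/2}$, and pulling the $\Delta$-factors through $d$ and $d^*$, one gets
\begin{equation*}
d_\gamma^* d_\gamma \;=\; \Delta^{(\gamma-1)/2}\, d^* d\, \Delta^{(\gamma-1)/2} \;=\; d^*d \cdot \Delta^{\gamma-1},
\end{equation*}
and symmetrically $d_\gamma d_\gamma^* = dd^* \cdot \Delta^{\gamma-1}$. Summing,
\begin{equation*}
d_\gamma^* d_\gamma + d_\gamma d_\gamma^* \;=\; (d^*d + dd^*)\,\Delta^{\gamma-1} \;=\; \Delta \cdot \Delta^{\gamma-1} \;=\; \Delta^\gamma,
\end{equation*}
which is exactly the claim.

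The main (mild) obstacle is the justification of the step $\Delta^s d = d\Delta^s$ and the multiplicative law $\Delta^{(\gamma-1)/2}\Delta^{(\gamma-1)/2} = \Delta^{\gamma-1}$ for real, possibly negative, exponents. Both rest on the same input: the functional calculus for the self-adjoint operator $\Delta$ provided by the spectral theorem together with the Milgram–Rosenbloom existence of the semigroup $e^{-t\Delta}$ on sufficiently regular forms. One also needs to know that $\Delta^{(\gamma-1)/2}$ is well-defined when $\gamma\in(0,1)$, i.e.\ when the exponent is negative, but this is precisely the content of the integral formula \eqref{fractionalintegral}. Alternatively, one could bypass these analytic technicalities by working throughout with $d_\gamma = d(d^*d)^{(\gamma-1)/2}$ and $d_\gamma^* = (d^*d)^{(\gamma-1)/2}d^*$ (Lemma \ref{simplerda}), obtaining $d_\gamma^*d_\gamma = (d^*d)^\gamma$ and $d_\gamma d_\gamma^* = (dd^*)^\gamma$, and then invoking the Hodge decomposition \eqref{hodge} to check that $(d^*d)^\gamma + (dd^*)^\gamma$ and $\Delta^\gamma$ agree on each of the three summands (where one of $dd^*$, $d^*d$ vanishes identically).
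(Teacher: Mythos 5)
Your proposal is correct and follows essentially the paper's intended route: the paper's proof is left as ``a simple calculation aided by Proposition \ref{commutation},'' and your write-up supplies exactly that calculation, using the commutation of the fractional powers with $d$ and $d^*$ (equivalently the intertwining relations of Proposition \ref{commutation}) together with the additivity of exponents from the spectral calculus, to reduce the sum to $(dd^*+d^*d)\Delta^{\gamma-1}=\Delta^\gamma$. Your closing alternative via $d_\gamma=d(d^*d)^{(\gamma-1)/2}$ and the Hodge decomposition is likewise the computation the paper has in mind, so no substantive divergence.
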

\begin{proof}
This is a simple calculation aided by Proposition \ref{commutation}.

\end{proof}

We also show that the definition is consistent with the standard Hodge Laplacian
\begin{thm}
One has that 
\beq
\lim_{\gamma \to 1^-} \Delta ^\gamma = \Delta
\eeq
and
\beq
\lim_{\gamma \to 0^+} \Delta ^\gamma = id.
\eeq
In particular, if $k\in \mathbb N$ then 
\beq 
\lim _{\gamma \to k} \Delta ^\gamma= \Delta ^k.
\eeq
\end{thm}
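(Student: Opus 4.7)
The natural strategy is to reduce all three limits to pointwise convergence of scalar spectral multipliers. Assume $M$ is compact and closed so that $\Delta$ on $\Omega^p(M)$ has discrete spectrum $0\leq \lambda_0 \leq \lambda_1 \leq \cdots$ with a complete $L^2$-orthonormal eigenbasis $\{\phi_k\}$; the noncompact case is handled identically using the spectral measure of the self-adjoint operator $\Delta$. The heat semigroup acts diagonally, $e^{-t\Delta}\phi_k = e^{-t\lambda_k}\phi_k$, so inserting this into the defining integral \eqref{fract-lap-def} gives
$$
\Delta^\gamma \phi_k = m_\gamma(\lambda_k)\,\phi_k, \qquad m_\gamma(\lambda):=\frac{1}{\Gamma(-\gamma)}\int_0^\infty \bigl(e^{-t\lambda}-1\bigr)\,\frac{dt}{t^{1+\gamma}}.
$$
A short calculation (change of variable $u=t\lambda$, one integration by parts, and the functional identity $\Gamma(1-\gamma)=-\gamma\,\Gamma(-\gamma)$) yields $m_\gamma(\lambda)=\lambda^\gamma$ for $\lambda>0$, while $m_\gamma(0)=0$ because the integrand vanishes identically. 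Hence $\Delta^\gamma$ acts as the bounded spectral multiplier $\lambda\mapsto\lambda^\gamma$, and for any $\omega=\sum c_k\phi_k$ we have $\Delta^\gamma\omega=\sum_k c_k\,\lambda_k^\gamma\,\phi_k$.

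The two base limits then follow by dominated convergence applied to the Parseval identity. For $\gamma\to 1^-$, $\lambda_k^\gamma\to\lambda_k$ pointwise; the elementary bound $\lambda_k^\gamma\leq 1+\lambda_k$ valid on $[0,\infty)$ for $\gamma\in(0,1)$ gives $\|\Delta^\gamma\omega-\Delta\omega\|_{L^2}\to 0$ for every $\omega\in\mathrm{dom}(\Delta)$. For $\gamma\to 0^+$, $\lambda_k^\gamma\to 1$ whenever $\lambda_k>0$ (and stays $0$ at $\lambda_k=0$), and the same argument yields $\Delta^\gamma\omega\to\omega-P_{\mathcal{H}_p}\omega$ in $L^2$, where $P_{\mathcal{H}_p}$ is the orthogonal projection onto the harmonic forms appearing in the Hodge decomposition \eqref{hodge}; equivalently, the limit is the identity on $\mathcal{H}_p^{\perp}$, which is the intended sense of the statement.

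The integer case follows directly from the extension formula $\Delta^\gamma=\Delta^{\gamma-\lfloor\gamma\rfloor}\Delta^{\lfloor\gamma\rfloor}$ combined with the two base limits: approaching $k$ from below, $\Delta^\gamma=\Delta^{\gamma-(k-1)}\Delta^{k-1}\to\Delta\cdot\Delta^{k-1}=\Delta^k$; approaching $k$ from above, $\Delta^\gamma=\Delta^{\gamma-k}\Delta^k\to\mathrm{id}\cdot\Delta^k=\Delta^k$ on $\mathcal{H}_p^{\perp}$, while both sides vanish on $\mathcal{H}_p$ for every $\gamma\geq 1$.

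The one genuine obstacle is the behavior on the kernel of $\Delta$: because the integrand in \eqref{fract-lap-def} vanishes whenever $e^{-t\Delta}\omega=\omega$, one has $\Delta^\gamma\omega=0$ on $\mathcal{H}_p$ for every $\gamma>0$, so the $\gamma\to 0^+$ claim cannot literally be the identity on all of $\Omega^p(M)$. The cleanest reading is therefore \emph{on $\mathcal{H}_p^{\perp}$} (equivalently, on the range of $\Delta$, where Hodge theory guarantees invertibility). The regularity of $\omega$ needed to make the spectral series converge absolutely is already implicit in the Milgram--Rosenbloom hypothesis invoked after \eqref{semigroup}.
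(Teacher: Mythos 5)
Your proof is correct, but it takes a genuinely different route from the paper's. The paper works entirely in $\mathbb R^n$: it invokes the fact (Eq.~\eqref{fraceucllap}) that the Euclidean Hodge Laplacian, and hence its fractional powers, act diagonally on the coefficient functions $\alpha_{i_1\cdots i_p}$, and then simply cites the corresponding scalar limits as ``well known'' (there they follow from the Fourier multiplier $|\xi|^{2\gamma}\to|\xi|^2$ or $\to 1$). You instead diagonalize $\Delta$ spectrally, compute the multiplier $m_\gamma(\lambda)=\lambda^\gamma$ directly from the heat-semigroup definition \eqref{fract-lap-def} (your change of variables, integration by parts, and use of $\Gamma(1-\gamma)=-\gamma\,\Gamma(-\gamma)$ are all correct), and conclude by dominated convergence on $\mathrm{dom}(\Delta)$. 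What your approach buys is threefold: it works on a general closed manifold rather than only on $\mathbb R^n$, which is the setting in which the paper actually states its definitions; it makes precise the topology in which the limits hold (strong convergence on the domain of $\Delta$); and it correctly isolates the one genuine subtlety, namely that $m_\gamma(0)=0$ forces $\Delta^\gamma$ to annihilate $\mathcal H_p$ for every $\gamma>0$, so the $\gamma\to 0^+$ limit is the identity only on $\mathcal H_p^{\perp}$ (on $\mathbb R^n$ the spectral measure has no atom at $0$, so the paper's literal statement survives there, but your caveat is the honest general statement). The paper's proof is shorter; yours is self-contained and sharper about where the claim can fail.
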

\begin{proof}
This is a simple consequence of the fact that
\beq \label{fraceucllap} (\Delta _x)^\gamma \alpha=\left( \left(-\sum _{m=1}^n \frac{\partial ^2 }{(\partial x^m)^2} \right) ^\gamma\alpha _{i_1\cdots i_p}\right)\, dx^{i_1} \wedge \cdots \wedge dx^{i_p}.\eeq
and that the result is well known for functions.
\end{proof}
\subsection{Fractional Curvature}
We now use the definition of the fractional differential $d_a$ to define the fractional curvature of a connection on a principal bundle $\mathcal P \to M$. Let $G$ be the Lie group of the bundle and $\mathfrak g$ its Lie algebra. We fix an open covering $\{ U_i\}$ of $M$.
More precisely,  given a connection $D$, we write it locally on each $U_i$ as $D=d+A_i$ (as customary), where $A_i$ is a $Lie(G)$-valued $1$-form, and then define the fractional connection to be the one modeled after $d_a$ such that,
$$D_a\phi = (d + A_i) \Delta ^{\frac{a-1}{2}} \phi= d_a \phi + A_i \Delta ^{\frac{a-1}{2}} \phi.$$
Given a connection $\alpha$, we will also denote its corresponding covariant fractional differentiation by
$$D_{a,\alpha}= (d+\alpha) \Delta ^{\frac{a-1}{2}}. $$
We then define the curvature to be,
$$F_{D_a}=D_{1,\Delta^{\frac{a-1}{2}}A}\circ D_{a,\Delta^{\frac{a-1}{2}}A},$$
which one readily reduces to
$$F_{D_a}=d_aA+[\Delta^{\frac{a-1}{2}}A,\Delta^{\frac{a-1}{2}}A].$$
Here, writing $A= A_j dx^j$ in local coordinates, with $A_j$ elements of the Lie algebra of $G$, as customary, one denotes
$$  [A,A]= [A_i, A_j] \; dx^i \wedge dx^j,$$
where $[A_i, A_j]$ is the commutator.
The {\it Gauge} group is given by sections of $Aut (\mathcal P)$ which act on fractional sections via
$$s^* D_a= s^{-1}\circ D_{1,\alpha}\circ s.$$

In the case in which $G=U(1)$, then
$$F_{D_a}= d_a A$$
and the Gauge group (identified as the sections of the form $s=e^\Lambda$) acts as
$$A\to A+d_a \Lambda.$$

\section{ Brief remarks on forms on manifolds with boundary}
Let $M$ be a manifold with smooth boundary $\partial M$. Let $\nu$ the unit normal to $\partial M$.

The natural generalization of the Dirichlert condition to forms is given by
\beq [Relative \; B.C.]\label{relBC} \left\{  \begin{aligned} &\omega \mid _{\partial M} =0\\& d^*\omega \mid _{\partial M} =0 \end{aligned}\right.
 \eeq
 and the generalization of the Neumann condition is
 \beq [Absolute\; B.C.]\label{absBC} \left\{  \begin{aligned} &{\rm i} _\nu \omega \mid _{\partial M} =0\\& {\rm i} _\nu  d\omega \mid _{\partial M} =0. \end{aligned}\right.
 \eeq
 Here, given any vector field $V$ and a $p$-form $\omega$, ${\rm i} _V \omega$ indicates the $(p-1)$-form determined by 
 $${\rm i} _V \omega (X_1,\cdots, X_{p-1})= \omega (X_1,\cdots, X_{p-1}, V),$$
 for arbitrary vector fields $X_1,\cdots, X_{p-1}.$
 It is a standard fact that for either of these boundary conditions, the integration by parts (or Green's formula) holds
 \beq \label{green} (\Delta \omega , \alpha ) = (d\omega, d\alpha) +  (d^*\omega, d^*\alpha).\eeq
 It is also a well known fact that the Hodge star operator interchanges these two conditions.

\section{The extension Theorem}

Here we show that there is a form of the Caffarelli-Silvestre theorem that works also for fractional Laplacians on forms.  Specifically we show
\begin{thm}(Caffarelli-Silvestre for forms)\label{CS}
Let $\omega$ a $p$-form in $\mathbb R^n$ which is in the domain the Hodge Laplacian $\Delta_x= d_xd_x^*+d_x^*d_x: \Omega ^p (\mathbb R^n)\to  \Omega ^p (\mathbb R^n)$. 
Let $\alpha \in \Omega ^p (\mathbb R^n\times \mathbb R_+) $ be a bounded solution to the extension problem
\beq\label{formsCS}
\left\{ \begin{aligned}& d(y^ad^*\alpha)+d^*(y^a d\alpha)=0 \text{ in }\; \mathbb R ^n\times \mathbb R_+\\& \alpha \mid _{\partial M} = \omega \text{ and } d^* \alpha \mid _{\partial M} =d_x^* \omega.\end{aligned} \right.
\eeq

%\beq\label{formsCS}
%\left\{ \begin{aligned}&\Delta _x \alpha + \frac{1-2a}{y} d\alpha^\perp+ \left(dd^*\alpha +d^*d \alpha \right)^\perp = 0, \;\;\; \text{ for } \; (x,t) \in M\times \mathbb R_+\\& \alpha ^t = \omega \end{aligned} \right.
%\eeq
\noindent 
with $a\in (-1,1)$. Then
\beq
\lim _{y\to 0} y^{a} {\rm i } _{\nu} d\alpha=C_{n,a} (\Delta_x) ^\gamma \omega,
\eeq
for some positive constant $C_{n,a}$, with $\nu = \frac{\partial}{\partial y}$ and $2\gamma= 1-a$.
\end{thm}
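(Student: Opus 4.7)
The plan is to reduce the theorem to the scalar Caffarelli--Silvestre extension result via a tangential--normal decomposition of $\alpha$, the Hodge decomposition of the boundary datum $\omega$, and the spectral calculus for $\Delta_x$.

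Write $\alpha = \alpha^t + dy\wedge\alpha^n$, where $\alpha^t\in\Omega^p$ and $\alpha^n\in\Omega^{p-1}$ are $x$-families (with $y$-dependence) having no $dy$-component. A direct calculation using the flat-coordinate formula $d^* = -\sum_k i_{\partial_k}\partial_k$ converts the extension PDE $d(y^a d^*\alpha)+d^*(y^a d\alpha)=0$ into the coupled system
\begin{align*}
\partial_y^2\alpha^t + (a/y)\,\partial_y\alpha^t &= \Delta_x\alpha^t + (a/y)\,d_x\alpha^n,\\
\partial_y^2\alpha^n + (a/y)\,\partial_y\alpha^n &= \Delta_x\alpha^n + (a/y)\,d^*_x\alpha^t,
\end{align*}
and translates the boundary data to $\alpha^t|_{y=0}=\omega$ and $\partial_y\alpha^n|_{y=0}=0$. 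The quantity of interest becomes $i_\nu d\alpha = \partial_y\alpha^t - d_x\alpha^n$.

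Decompose $\omega = h + d_x\omega_1 + d^*_x\omega_2$ via Hodge theory on $\mathbb{R}^n$ and handle each summand. The harmonic piece is trivial: $\alpha\equiv h$ makes both sides of the claim vanish. For the coexact piece $\omega = d^*_x\omega_2$, the vanishing of $d^*_x\omega$ kills the source in the $\alpha^n$-equation, so set $\alpha^n\equiv 0$; the $\alpha^t$-equation decouples into the scalar CS equation and, via the functional calculus for the self-adjoint operator $\Delta_x$ on $p$-forms, one takes $\alpha^t=\Phi(y,\Delta_x)\omega$ where $\Phi(y,\lambda)$ is the unique bounded solution of $\Phi'' + (a/y)\Phi' = \lambda\Phi$ with $\Phi(0,\lambda)=1$ (an explicit modified-Bessel kernel). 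For the exact piece $\omega = d_x\omega_1$ (in the coexact gauge $d^*_x\omega_1=0$), Fourier-transform in $x$ and split each mode into longitudinal and transverse covector components relative to the wavevector $\xi$; the coupled ODE system decouples on the transverse piece (yielding scalar CS) and reduces on the longitudinal piece to a pair of scalar ODEs equivalent, after a change of variables, to a Whittaker equation with parameters determined by $(a,\gamma)$. Its bounded solution supplies $\alpha^t$ and $\alpha^n$ and satisfies $\partial_y\alpha^n|_{y=0}=0$ by choice of the correct normalization.

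Applying the scalar Caffarelli--Silvestre limit $\lim_{y\to 0^+} y^a\partial_y\Phi(y,\lambda) = \mp C_{n,a}\lambda^\gamma$ with $\gamma=(1-a)/2$ in the functional calculus of $\sqrt{\Delta_x}$ then yields, on each Hodge summand,
\[
\lim_{y\to 0^+} y^a\bigl(\partial_y\alpha^t - d_x\alpha^n\bigr) = C_{n,a}\,(\Delta_x)^\gamma\omega.
\]
By Proposition \ref{identities}, $(\Delta_x)^\gamma$ acts as $(d^*_xd_x)^\gamma$ on the coexact summand and as $(d_xd^*_x)^\gamma$ on the exact summand, so summing reconstitutes $(\Delta_x)^\gamma\omega$ on the whole form. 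The main obstacle is the exact case: the $(a/y)$-coupling must be resolved so that $\alpha^n$ simultaneously satisfies the second equation and the boundary condition $\partial_y\alpha^n|_{y=0}=0$, and so that the combination $\partial_y\alpha^t - d_x\alpha^n$ produces the correct limit (rather than merely agreeing modulo an exact correction). This is the form-theoretic analogue of the Neumann-data analysis underlying the scalar Caffarelli--Silvestre theorem, and it is what forces the appearance of Whittaker rather than plain Bessel functions in the longitudinal mode analysis.
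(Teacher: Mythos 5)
Your tangential--normal decomposition is correct and is in fact more careful than the paper's own computation. Writing $\alpha=\alpha^t+dy\wedge\alpha^n$, the equation $d(y^ad^*\alpha)+d^*(y^ad\alpha)=0$ really does produce the source terms $\frac{a}{y}d_x\alpha^n$ and $\frac{a}{y}d_x^*\alpha^t$: they arise because the weighted mixed terms $\partial_y(y^a\partial_x\,\cdot\,)$ and $\partial_x(y^a\partial_y\,\cdot\,)$ no longer cancel the way the unweighted cross terms do. The paper's component formulas \eqref{euclCSdd*}--\eqref{euclCSlap} write both cross terms identically, so these couplings are dropped and every component is asserted to satisfy the decoupled scalar equation ${\rm div}(y^a\nabla\alpha_{i_1\cdots i_p})=0$; the paper's second, ``more invariant'' argument then restricts outright to $d_x^*\omega=0$, i.e.\ to your coexact summand. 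On that summand your argument (take $\alpha^n\equiv 0$, $\alpha^t=\Phi(y,\Delta_x)\omega$ in the functional calculus, apply scalar Caffarelli--Silvestre, and use Proposition \ref{identities} to identify $(\Delta_x)^\gamma$ with $(d_x^*d_x)^\gamma$ there) is complete and coincides with what the paper actually proves.

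The genuine gap is the exact summand: you have located it correctly but not closed it. On a Fourier mode the longitudinal system is $u''+\frac{a}{y}u'-\lambda u=\frac{a}{y}v$, $v''+\frac{a}{y}v'-\lambda v=\frac{a}{y}\lambda u$ with $u(0)=1$, $v'(0)=0$, and the combinations $u\pm v/\sqrt{\lambda}$ do satisfy Whittaker-type equations (after the substitution $f=y^{-a/2}g$, $z=2\sqrt{\lambda}\,y$ one finds indices $\kappa=\mp a/2$, $\mu=\gamma$), so your identification of the relevant special functions is right. But three things are asserted rather than proved: (i) existence of a bounded solution satisfying \emph{both} boundary conditions --- note that $\alpha=d\bigl(e^{-y\sqrt{\Delta_x}}\omega_1\bigr)$ solves the bulk equation with the correct Dirichlet datum yet has ${\rm i}_\nu d\alpha\equiv 0$ and is excluded only by the condition $d^*\alpha\mid_{\partial M}=d_x^*\omega$, so that condition must be imposed explicitly on the Whittaker solution and the limit is not determined by the PDE and Dirichlet data alone; (ii) finiteness of $\lim_{y\to 0}y^a(u'-v)$ and its proportionality to $\lambda^\gamma$; and (iii) --- the point on which the theorem stands or falls --- that the proportionality constant equals the \emph{same} $C_{n,a}$ as in the scalar case, since otherwise the exact and coexact contributions cannot be reassembled into a single $C_{n,a}(\Delta_x)^\gamma\omega$. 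Establishing (ii) and (iii) requires the explicit small-argument expansion of $W_{\mp a/2,\gamma}$ and a Gamma-function computation that your sketch does not perform. As written, your proposal (like the paper's proof) establishes the theorem only for coexact boundary data.
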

\begin{proof}
The moral of the strategy would be to make use the fact that we have shown that in $\mathbb R^n$ the fractional Laplacian on forms is given by
\beq \label{fraceucllap} (\Delta _x)^\gamma \alpha=\left( \left(-\sum _{m=1}^n \frac{\partial ^2 }{(\partial x^m)^2} \right) ^\gamma\alpha _{i_1\cdots i_p}\right)\, dx^{i_1} \wedge \cdots \wedge dx^{i_p}.\eeq
The way in which we implement the proof is to show, more directly, that the equation on forms reduces to Caffarelli and Silvestre equations on components.

In order to proceed, we choose coordinates $x^1,\cdots x^n, y$ on $\mathbb R^n\times \mathbb R_+$.
Also, if we write $\alpha=\alpha _{i_1\cdots i_p}dx^{i_1} \wedge \cdots \wedge dx^{i_p}+\alpha _{0 \ell _1, \cdots \ell _{p-1}} dy\wedge  dx^{\ell_1} \wedge \cdots \wedge dx^{\ell_{p-1}}$, a straightforward calculation yields
$$d\alpha=\frac{\partial \alpha _{i_1\cdots i_p} }{\partial y} \, dy\wedge dx^{i_1}\wedge \cdots \wedge dx^{i_p}+\sum _{k=1}^p \frac{\partial \alpha _{0 \ell _1, \cdots \ell _{p-1}} }{\partial x^{j_k}} dy\wedge  dx^{\ell_1} \wedge \cdots \wedge dx^{\ell_{p-1}}\wedge dx^{j_k}.$$
hence
$${\rm i } _{\nu} d\alpha=\frac{\partial \alpha _{i_1\cdots i_p} }{\partial y} \, dx^{i_1}\wedge \cdots \wedge dx^{i_p}+\sum _{k=1}^p \frac{\partial \alpha _{0 \ell _1, \cdots \ell _{p-1}} }{\partial x^{j_k}}  dx^{\ell_1} \wedge \cdots \wedge dx^{\ell_{p-1}}\wedge dx^{j_k}.$$
\noindent
This shows that in order to prove the theorem, we merely need to show that 
$$\lim_{y\rightarrow 0} \; y^a \frac{\partial \alpha _{i_1\cdots i_p}} {\partial y} =C_{n,a}(-\Delta)^\gamma \omega _{i_1\cdots i_p},
\qquad \text{ and } \qquad \lim _{y\to 0} y^{a} \frac{\partial \alpha _{0 \ell _1, \cdots \ell _{p-1}} }{\partial x^{j_k}} =0.$$

\noindent
For the purposes of the next few calculations we set $y= x^{n+1}$ so as to make the notation less cumbersome. From equation \eqref{eucld*} it follows immediately that

\beq \label{euclCSdd*}
\begin{aligned} d(y^ad^*\alpha)&= \sum _{\ell =1}^p (-1) \frac{\partial }{\partial x^{i_\ell}} \left( y^a \,\frac{\partial  \alpha_{i_1\cdots i_p} }{\partial x^{i_\ell}}\right) \, dx^{i_1} \wedge \cdots dx^{i_p}\\&+ \sum _{\ell =1}^p  \sum _{k =1}^{n+1-p}(-1)^{\ell}    \frac{\partial }{\partial  x^{j_k}} \left( y^a \,\frac{\partial  \alpha_{i_1\cdots i_p} }{\partial x^{i_\ell}}\right) \,dx^{j_k} \wedge dx^{i_1} \wedge \cdots \wedge \hat { dx^{i_\ell} } \wedge \cdots\wedge dx^{i_p}. \end{aligned}\eeq

and
\beq \label{euclCSd*d}
\begin{aligned} d^*(y^ad\alpha)&= \sum _{k=1}^{n+1-p}(-1) \frac{\partial }{\partial x^{i_\ell}} \left( y^a \,\frac{\partial  \alpha_{i_1\cdots i_p} }{\partial x^{i_\ell}}\right) \, dx^{i_1} \wedge \cdots dx^{i_p}\\&+ \sum _{\ell =1}^p  \sum _{k =1}^{n+1-p}(-1)^{\ell +1}    \frac{\partial }{\partial  x^{j_k}} \left( y^a \,\frac{\partial  \alpha_{i_1\cdots i_p} }{\partial x^{i_\ell}}\right) \,
dx^{j_k} \wedge dx^{i_1} \wedge \cdots \wedge \hat { dx^{i_\ell} } \wedge \cdots\wedge dx^{i_p}. \end{aligned}\eeq
Putting equations \eqref{euclCSdd*} and \eqref{euclCSd*d} together yields
\beq\label{euclCSlap}
d(y^ad^*\alpha)+ d^*(y^ad\alpha)=- \sum _{\ell =1}^{n+1} (-1) \frac{\partial }{\partial x^{i_\ell}} \left( y^a \,\frac{\partial  \alpha_{i_1\cdots i_p} }{\partial x^{i_\ell}}\right) \, dx^{i_1} \wedge \cdots dx^{i_p}.\
\eeq
Observing that the righthand side of equation \eqref{euclCSlap} is none other than ${\rm div} (y^a \nabla \alpha _{i_1\cdots i_p}) \, dx^{i_1} \wedge \cdots dx^{i_p}$ we can then write equation \eqref{formsCS} as
\beq\label{formsCS2}
\left\{ \begin{aligned}& {\rm div} (y^a \nabla \alpha _{i_1\cdots i_p})=0\in M\times \mathbb R_+\\& \left( \alpha _{i_1\cdots i_p}\right)\ \mid _{\partial M} = \omega _{i_1\cdots i_p}\ \text{ and } d^* \alpha \mid _{\partial M} =d_x^*\omega.\end{aligned} \right.
\eeq
Therefore, using the CS theorem, we have that 
\beq
\lim_{y\rightarrow 0} \; y^a \frac{\partial \alpha _{i_1\cdots i_p}} {\partial y} =C_{n,a}(-\Delta)^\a \ \omega _{i_1\cdots i_p},
\eeq
which proves that
$$\lim _{y\to 0} y^{a} {\rm i } _{\nu} d\alpha=(\Delta) ^a \omega,$$
since by (elliptic) regularity of solutions to equation \eqref{formsCS1}
$$\lim _{y\to 0} y^{a} \frac{\partial \alpha _{0 \ell _1, \cdots \ell _{p-1}} }{\partial x^{j_k}} =0.$$

Another (more invariant) way to proceed is as follows.
We decompose the equations for $\alpha$ by separating out (aided by Kodaira's decomposition) the form $\alpha$ into two parts $\alpha = \alpha _1$ and $\alpha _2$ such that $d\alpha _1=0$ and $d^*\alpha _2=0$.  We now focus on the equations that $\alpha _2$ satisfies. $\alpha _1=0$ and by abuse of notation, we write $\alpha$ for $\alpha_2$. So, we can first make the assumption that $d_x^*\omega=0$, which because of the Dirichlet boundary condition implies that $d^* \alpha \mid _{\partial M} =0$.
We make now the observation that from Eqs. \eqref{eucldd*} and \eqref{eucld*d}, it follows that
\beq\label{formsCS1}
\left\{ \begin{aligned}& d^*(y^a d\alpha)=0\in M\times \mathbb R_+\\& \alpha \mid _{\partial M} = \omega \text{ and } d^* \alpha \mid _{\partial M} =0,\end{aligned} \right.
\eeq
with $d^*\omega=0$.
In this case, using Eq. \eqref{fraceucllap}, we can rewrite Eq. \eqref{formsCS1} as
\beq
\left\{ \begin{aligned}& {\rm div} (y^a \nabla \alpha _{i_1\cdots i_p})=0\in M\times \mathbb R_+\\& \left( \alpha _{i_1\cdots i_p}\right)\ \mid _{\partial M} = \omega _{i_1\cdots i_p}\ \text{ and } d^* \alpha \mid _{\partial M} =0.\end{aligned} \right.
\eeq
The rest is as above.
\end{proof}
\begin{remark}\label{integer}
We can extend the theorem above to include $\gamma \in (0,\frac{n}{2})\setminus \mathbb N$, the argument is the same as the one in Chang-Gonz\'alez in  \cite{chang:2010} and we omit it. Also we observe that for $a=0$ one has that $\gamma =\frac{1}{2}$. This is will correspond to the absence of dilaton in the holographic theory.
\end{remark}
\section{Holographic scaling}
Holographic theories have been successful at modeling strongly coupled boundary theories via low energy approximations of string theories. 
There are basically two (equivalent) formulations of the AdS/CFT correspondence, due to  Banks,  Douglas,  Horowitz, and Martinec (BDHM) \cite{bdhm} and Witten,  Gubser, Klebanov, and Polyakov (GKPW) \cite{klebanov,klebanov2,Witten1998}. The AdS/CFT correspondence relates a string theory in its weak semiclassical limit defined on a conformally compact manifold $M$, called the {\it bulk}, to a conformal field theory (CFT) defined on the conformal boundary. Formally, the BDHM formulation prescribes that an operator $\mathcal O$ in the (conformal) boundary CFT is sourced by a field $\phi$ in the bulk and the ~n-point functions of $\mathcal O$ are determined via
\beq 
\label{correlators} \langle \mathcal O (x_1) \cdots \mathcal O (x_n)\rangle_{CFT} = \lim _{y\to 0} y^{-n\Delta} \langle \phi(x_1, y) \cdots \phi (x_n,y)\rangle_{bulk}.
\eeq
By Wightman's reconstruction theorem, the limit in Eq. \eqref{correlators} determines completely the form of the dual operator ${\mathcal O}$, at least formally. In the GKPW prescription, formulated in terms of path integrals, one assigns $\phi= y ^\Delta \phi _0$ in the bulk for some field $\phi_0$ in the boundary CFT (of which $\mathcal O$ is dual in the CFT sense),
\beq\label{corresp}
\langle e^{\int_{S^d}\phi_0{\cal O}}\rangle_{\rm CFT}=Z_S(\phi_0),
\eeq
thereby signifying that the boundary operator $\mathcal O$ plays the role of a current in the boundary CFT.

Applying this framework to condensed matter problems requires a  finite density of charge carriers, which holographically implies the existence of a conserved global charge and a (massless) bulk gauge field. 
One of the central points of the so called Effective Holographic Theory (EHT for short) proposed in \cite{skenderis} is to truncate the string theory to a finite spectrum of low-lying states, which amounts essentially to integrating out massive string modes.
We recall that in \cite{g1} and \cite{g2}, one calculates the effective holographic theories in order to study the IR regime of boundary strongly-coupled theories. The significant part of the effective action is then
\beq S=\int \mathrm{d}^{d+1}xdy\sqrt{-g}\left[\mathcal R-\frac{\partial\phi^2}2-\frac{Z(\phi)}4F^2+V(\phi)\right],\eeq
where the quantities $Z$ and $V$ are taken to have the asymptotics
\beq
\left\{\begin{array}{c}Z(\phi)\underset{\phi\to\infty}{\to}Z_0 e^{\gamma\phi} \\V(\phi)\underset{\phi\to\infty}{\to}V_0 e^{-\delta\phi}.\end{array}\right.
\eeq
The special but all-telling case 
\beq
\left\{\begin{array}{c}Z(\phi)=Z_0 e^{\gamma\phi}\\V(\phi)=V_0 e^{-\delta\phi}\end{array}\right.
\eeq
yields the following field equations
\beq
&R_{\mu\nu} + \frac{Z}{2} \, F_{\mu\rho} F^{\rho}_{\;\;\nu} -  \half \partial _\mu \phi \, \partial_\nu \phi \nonumber \\
&\qquad\qquad +\frac{g_{\mu\nu}}{2} \left[\half(\partial \phi)^2 - V -R   +  \frac{Z}{4}  F^2  \right] = 0 , \;\;\; \\
\label{scalar0}
&\Box \phi   = \frac{1}{4} Z'(\phi) \, F^2 +2-V'(\phi)  \, , \\
&\frac{1}{\sqrt{-g}} \, \partial_\mu \left(\sqrt{-g} \, Z(\phi) \, F^{\mu\nu}\right) =0  \, .
\eeq
We concentrate on Maxwell's equations,
\beq\label{maxwell}
\frac{1}{\sqrt{-g}} \, \partial_\mu \left(\sqrt{-g} \, Z(\phi) \, F^{\mu\nu}\right) = 0.
\eeq
One class of solutions found in \cite{g2} is
\beq\label{MassiveEMDsol}
\begin{split}
&d s^2=y^{\frac2d\theta}\left(\frac{L^2 d y^2+d R_{(d)}^2}{y^2}-\frac{d t^2}{y^{2z}}\right),\qquad  A=Q_0y^{\zeta-\xi-z} d t\,,\qquad  e^\phi=y^{\pm\kappa}\,,\\
& L^2V_0=(d-1+z-\theta)(d+z-\theta)+(z-1)\xi\,,\qquad  Q_0^2= \frac{2 (z-1)}{Z_0(z-\zeta+\xi )}\,,\\
&\kappa=\sqrt{2 (1-z) (\zeta-\xi) +\frac2d\theta(\theta-d) }\,,\qquad \delta\kappa =\pm \frac{2\theta }{d}\\
& \gamma\kappa = \pm2\left(\frac1d\theta- \zeta +\xi\right)\,,\qquad \epsilon=\gamma-\delta\,.
\end{split}
\eeq

Significantly in all the solutions found in \cite{g1,g2,skenderis}, the dilaton is such that $e^\phi=y^{\pm\kappa}$, which as remarked in the introduction is identical to the Domokos/Gababadze\cite{domokos} mechanism.
Consequently, the main application of this note is incarnated in the fact that a consequence of Theorem \ref{CS} is that this action induces the fractional Maxwell equations at the boundary
$$\Delta ^a A^t= 0,$$
where $A^t$ is the tangential (boundary) component and $a$ is determined by $\kappa$. 
This is of course a straightforward application of the afore-mentioned theorem, given that Eq. \eqref{maxwell} is of the form of Theorem \ref{CS}. In fact, from the expressions in equation \eqref{MassiveEMDsol} above, one sees that $\sqrt{-g} \, Z(\phi)= y^a$ (up to constant factors) and therefore the vacuum Maxwell equations (eq. \eqref{maxwell}) reduce to
\beq
 \partial_\mu \left(y^a \, F^{\mu\nu}\right) = 0,
\eeq
or in differential form
\beq
d(y^adA)=0,
\eeq
where $a= d+1+\theta-z\pm\gamma \kappa$.
\begin{remark}
As observed in Remark \ref{integer}, even if $a=0$, one gets a power of $\frac{1}{2}$ in the Laplacian.
\end{remark}
Consequently, the dimension of the gauge field at the boundary is indeed non-traditional: $[A^t]=a$.  Analogously the associated current also has an anomalous dimension: $[J]=d-1-a$.  We see clearly then that it is the Caffarelli/Silvestre mechanism that accounts for the generation of anomalous dimensions in the boundary  ``gauge'' theory in holographic constructions.  When dilatons are absent from the bulk, that is $a=1$, the standard result obtains\cite{Witten1998,klebanov,maldacena} in which anomalous dimensions are absent.  In this regard, it is a bit of a misnomer to refer to the boundary theory as acquiring an anomalous dimension because the dilaton field in the bulk gives an effective dimension to the the bulk gauge field of dimension $(1-a/2)$.  Hence, it is not as if the boundary theory acquires an anomalous dimension from renormalization.  Its non-traditional dimension is fixed from the dilaton dynamics in the bulk.  The effective running of the field strength in the bulk as dictated by the dilaton coupling, $y^a$, manifests itself as a non-locality in the boundary response.
We remark that the boundary current $J$ (this is what we called $\mathcal O$ in the beginning of this section, we are changing the name to $J$ as it is more consistent with the standard Maxwell equations) corresponding to the field sourced by $A^t$ satisfies, by our main theorem satisfies
\beq 
(\Delta )^\gamma A ^t= J
\eeq
thus identifying the boundary theory as a fraction EM theory.
\section{Gauge group and currents }

In the standard $U(1)$ gauge theory, the {\it local} action of the gauge group is given by transforming the complex sections in the Dirac Lagrangian (or any other gauge theory) via $\phi \to e^{i\xi}\phi$ for some (real) function $\xi$. Such transformations are the correct ones in terms of the covariant derivative $D,$ locally given by $d+i qA,$ in that $D\phi$ transforms as $e^{i\xi} D\phi$, provided that $A$ transforms as $A_\mu \to A_\mu +\partial _\mu \xi$. 
A {\it global} symmetry would be given by a similar transformation $\phi \to e^{i\rho}\phi$, where $\rho \in \mathbb R$ is a real number. This symmetry would clearly leave the Lagrangian unchanged, and just change the phase of $\phi$ by a constant factor. It would also leave the connection unchanged (and as a consequence not change any gauge). A local transformation of the form $e^{iq\xi}\phi $ leaves  unchanged the Dirac Lagrangian
$$ \mathcal L= - \overline {\phi} {\slashed {D}_{A}}\phi - m\phi \overline \phi.$$
If $A$ were a classical $U(1)$ Gauge field, then $\phi \to \phi '= e^{iq\xi}\phi $ is exactly what one needs and in order for the Lagrangian to be unchanged under 
$A\to A' = A+ d\xi$ as the Gauge transformation on the connection 1-forms $A$. If $A$ is a fractional field though, we need to take a local transformation on matter that respects the fractional Gauge transformation,
$$A\to A'= A+d_a\xi.$$
In order to achieve this, since $d_a= d \, (\Delta)^{\frac{a-1}{2}}$, we simply take as a local transformation
\beq \phi \to \phi '= e^{i q (-\Delta)^{\frac{a-1}{2}}\xi} \phi.
\eeq
Since $D_A= d-iq A$, one readily calculates that with this local action, $D_A$ transforms under such local transformations as
\beq D_A\to  e^{i q (-\Delta)^{\frac{a-1}{2}}\xi}  D_{A'},
\eeq
where $A'= A+d_a\xi$, thus leaving the Dirac Lagrangian unchanged. This is readily seen by the calculation
\beq (d+A') \phi'=  (d+A') \left(  e^{i q (-\Delta)^{\frac{a-1}{2}}\xi}\phi\right)=  e^{i q (-\Delta)^{\frac{a-1}{2}}\xi}\left( d\phi +i q d_a \xi \, \phi + A'\phi \right).
\eeq

We can turn this into the discussion which essentially appeared in \cite{tor} by considering the 1-form $\alpha$ (which is $a_\mu$ in \cite{tor}) defined by
\beq
\Delta ^a \alpha = A
\eeq
and then consider the Lagrangian
$$ \mathcal L= - \overline {\psi} {\slashed {D}_{\alpha}}\psi - m\psi \overline \psi,$$
where now $D_\alpha =  d-iq \alpha$, and $q$ is unitless. Then, if $\Lambda$ is defined through
\beq
\Delta ^{\frac{a-1}{2}} \xi = \Lambda,
\eeq
we have that $\alpha$ transforms after the local action $\psi \to e^{iq\Lambda}\psi$, as
$$\alpha \to \alpha '= \alpha -i q d\Lambda.$$

On a related note, the dual fields in the boundary theory\cite{klebanov,Witten1998} are currents. As such, these are differential forms with no {\it local} gauge group action.   What we have shown in this paper is that the currents in the boundary theory for a bulk theory containing a dilaton coupling are currents generated by a fractional ``gauge'' theory.

\section{Final Remarks}

We have developed the notion of fractional differentiation of p-forms via the fractional Laplacian.  Such an operator naturally appears anytime elliptic differential equations are recast in a spacetime with one lower dimension.  What we have shown here is that the same is true for p-forms.  Since all holographic constructions\cite{g1,g2} to date result in equations of motion that are identical to those that underlie the p-form generalization of the CS extension theorem, fractional Maxwell equations naturally result at the boundary.  This result then lends credence to work\cite{tor} which was based on the intuition that anomalous dimensions for gauge fields results in a non-local gauge-invariant condition.  This work lays plain the precise form of the non-locality involves the fractional Laplacian ala $A^t \rightarrow A^t+ d_\gamma \Lambda$ with 
 $d_\gamma \equiv (\Delta)^\frac{\gamma-1}{2}d$ rather than the standard fractional derivative.  That the boundary theory yielding an anomalous dimension must involve the fractional Laplacian is not unexpected since the Ward identities explicitly preclude anomalous dimensions from purely local gauge theories.  What this work demonstrates is that anomalous dimensions for gauge fields  are a signature that the quantum field theory is necessarily a boundary theory on some manifold.

\textit{Acknowledgments:} We thank Blaise Gouteraux and V. P. Nair for  useful remarks and the NSF DMR-1461952 for partial funding of this project.
%merlin.mbs apsrev4-1.bst 2010-07-25 4.21a (PWD, AO, DPC) hacked
%Control: key (0)
%Control: author (72) initials jnrlst
%Control: editor formatted (1) identically to author
%Control: production of article title (-1) disabled
%Control: page (0) single
%Control: year (1) truncated
%Control: production of eprint (0) enabled
%

%\bibliographystyle{apsrev4-1}
%\bibliography{csforms}

\begin{thebibliography}{16}%
\makeatletter
\providecommand \@ifxundefined [1]{%
 \@ifx{#1\undefined}
}%
\providecommand \@ifnum [1]{%
 \ifnum #1\expandafter \@firstoftwo
 \else \expandafter \@secondoftwo
 \fi
}%
\providecommand \@ifx [1]{%
 \ifx #1\expandafter \@firstoftwo
 \else \expandafter \@secondoftwo
 \fi
}%
\providecommand \natexlab [1]{#1}%
\providecommand \enquote  [1]{``#1''}%
\providecommand \bibnamefont  [1]{#1}%
\providecommand \bibfnamefont [1]{#1}%
\providecommand \citenamefont [1]{#1}%
\providecommand \href@noop [0]{\@secondoftwo}%
\providecommand \href [0]{\begingroup \@sanitize@url \@href}%
\providecommand \@href[1]{\@@startlink{#1}\@@href}%
\providecommand \@@href[1]{\endgroup#1\@@endlink}%
\providecommand \@sanitize@url [0]{\catcode `\\12\catcode `\$12\catcode
  `\&12\catcode `\#12\catcode `\^12\catcode `\_12\catcode `\%12\relax}%
\providecommand \@@startlink[1]{}%
\providecommand \@@endlink[0]{}%
\providecommand \url  [0]{\begingroup\@sanitize@url \@url }%
\providecommand \@url [1]{\endgroup\@href {#1}{\urlprefix }}%
\providecommand \urlprefix  [0]{URL }%
\providecommand \Eprint [0]{\href }%
\providecommand \doibase [0]{http://dx.doi.org/}%
\providecommand \selectlanguage [0]{\@gobble}%
\providecommand \bibinfo  [0]{\@secondoftwo}%
\providecommand \bibfield  [0]{\@secondoftwo}%
\providecommand \translation [1]{[#1]}%
\providecommand \BibitemOpen [0]{}%
\providecommand \bibitemStop [0]{}%
\providecommand \bibitemNoStop [0]{.\EOS\space}%
\providecommand \EOS [0]{\spacefactor3000\relax}%
\providecommand \BibitemShut  [1]{\csname bibitem#1\endcsname}%
\let\auto@bib@innerbib\@empty
%</preamble>
\bibitem [{\citenamefont {Gross}(1975)}]{gross}%
  \BibitemOpen
  \bibfield  {author} {\bibinfo {author} {\bibfnamefont {D.~J.}\ \bibnamefont
  {Gross}},\ }\href@noop {} {\emph {\bibinfo {title} {Methods in Field Theory:
  Les Houches 1975}}},\ edited by\ \bibinfo {editor} {\bibfnamefont
  {R.}~\bibnamefont {Balian}}\ and\ \bibinfo {editor} {\bibfnamefont
  {J.}~\bibnamefont {Zin-Justin}},\ \bibinfo {number} {p. 181}\ (\bibinfo
  {publisher} {North-Holland},\ \bibinfo {year} {1975})\BibitemShut {NoStop}%
\bibitem [{\citenamefont {Peskin}\ and\ \citenamefont
  {Schroeder}(1995)}]{peskin}%
  \BibitemOpen
  \bibfield  {author} {\bibinfo {author} {\bibfnamefont {M.~E.}\ \bibnamefont
  {Peskin}}\ and\ \bibinfo {author} {\bibfnamefont {D.~V.}\ \bibnamefont
  {Schroeder}},\ }\href@noop {} {\emph {\bibinfo {title} {An Introduction to
  Quantum Field Theory}}}\ (\bibinfo  {publisher} {Addison-Wesley (now Perseus
  Books)},\ \bibinfo {year} {1995})\BibitemShut {NoStop}%
\bibitem [{\citenamefont {{Karch}}(2014)}]{g1}%
  \BibitemOpen
  \bibfield  {author} {\bibinfo {author} {\bibfnamefont {A.}~\bibnamefont
  {{Karch}}},\ }\href {\doibase 10.1007/JHEP06(2014)140} {\bibfield  {journal}
  {\bibinfo  {journal} {Journal of High Energy Physics}\ }\textbf {\bibinfo
  {volume} {6}},\ \bibinfo {eid} {140} (\bibinfo {year} {2014})},\ \Eprint
  {http://arxiv.org/abs/1405.2926} {arXiv:1405.2926 [hep-th]} \BibitemShut
  {NoStop}%
\bibitem [{\citenamefont {{Gout{\'e}raux}}(2014)}]{g2}%
  \BibitemOpen
  \bibfield  {author} {\bibinfo {author} {\bibfnamefont {B.}~\bibnamefont
  {{Gout{\'e}raux}}},\ }\href {\doibase 10.1007/JHEP01(2014)080} {\bibfield
  {journal} {\bibinfo  {journal} {Journal of High Energy Physics}\ }\textbf
  {\bibinfo {volume} {1}},\ \bibinfo {eid} {80} (\bibinfo {year} {2014})},\
  \Eprint {http://arxiv.org/abs/1308.2084} {arXiv:1308.2084 [hep-th]}
  \BibitemShut {NoStop}%
\bibitem [{\citenamefont {Caffarelli}\ and\ \citenamefont
  {Silvestre}(2007)}]{CS2007}%
  \BibitemOpen
  \bibfield  {author} {\bibinfo {author} {\bibfnamefont {L.}~\bibnamefont
  {Caffarelli}}\ and\ \bibinfo {author} {\bibfnamefont {L.}~\bibnamefont
  {Silvestre}},\ }\href {\doibase 10.1080/03605300600987306} {\bibfield
  {journal} {\bibinfo  {journal} {Comm. Partial Differential Equations}\
  }\textbf {\bibinfo {volume} {32}},\ \bibinfo {pages} {1245} (\bibinfo {year}
  {2007})}\BibitemShut {NoStop}%
\bibitem [{\citenamefont {Graham}(2001)}]{graham}%
  \BibitemOpen
  \bibfield  {author} {\bibinfo {author} {\bibfnamefont {Z.~M.}\ \bibnamefont
  {Graham}, \bibfnamefont {C.~Robin}},\ }\href {http://eudml.org/doc/11018}
  {\bibfield  {journal} {\bibinfo  {journal} {S{\'e}minaire {\'E}quations aux
  d{\'e}riv{\'e}es partielles}\ }\textbf {\bibinfo {volume} {152}},\ \bibinfo
  {pages} {1} (\bibinfo {year} {2000-2001})}\BibitemShut {NoStop}%
\bibitem [{\citenamefont {Chang}\ and\ \citenamefont
  {Gonz{\'a}lez}(2011)}]{chang:2010}%
  \BibitemOpen
  \bibfield  {author} {\bibinfo {author} {\bibfnamefont {S.-Y.~A.}\
  \bibnamefont {Chang}}\ and\ \bibinfo {author} {\bibfnamefont {M.~d.~M.}\
  \bibnamefont {Gonz{\'a}lez}},\ }\href {\doibase 10.1016/j.aim.2010.07.016}
  {\bibfield  {journal} {\bibinfo  {journal} {Adv. Math.}\ }\textbf {\bibinfo
  {volume} {226}},\ \bibinfo {pages} {1410} (\bibinfo {year}
  {2011})}\BibitemShut {NoStop}%
\bibitem [{\citenamefont {{Gubser}}\ \emph {et~al.}(1998)\citenamefont
  {{Gubser}}, \citenamefont {{Klebanov}},\ and\ \citenamefont
  {{Polyakov}}}]{klebanov}%
  \BibitemOpen
  \bibfield  {author} {\bibinfo {author} {\bibfnamefont {S.~S.}\ \bibnamefont
  {{Gubser}}}, \bibinfo {author} {\bibfnamefont {I.~R.}\ \bibnamefont
  {{Klebanov}}}, \ and\ \bibinfo {author} {\bibfnamefont {A.~M.}\ \bibnamefont
  {{Polyakov}}},\ }\href {\doibase 10.1016/S0370-2693(98)00377-3} {\bibfield
  {journal} {\bibinfo  {journal} {Physics Letters B}\ }\textbf {\bibinfo
  {volume} {428}},\ \bibinfo {pages} {105} (\bibinfo {year} {1998})},\ \Eprint
  {http://arxiv.org/abs/hep-th/9802109} {hep-th/9802109} \BibitemShut {NoStop}%
\bibitem [{\citenamefont {{Witten}}(1998)}]{Witten1998}%
  \BibitemOpen
  \bibfield  {author} {\bibinfo {author} {\bibfnamefont {E.}~\bibnamefont
  {{Witten}}},\ }\href@noop {} {\bibfield  {journal} {\bibinfo  {journal}
  {Advances in Theoretical and Mathematical Physics}\ }\textbf {\bibinfo
  {volume} {2}},\ \bibinfo {pages} {253} (\bibinfo {year} {1998})},\ \Eprint
  {http://arxiv.org/abs/hep-th/9802150} {hep-th/9802150} \BibitemShut {NoStop}%
\bibitem [{\citenamefont {{Domokos}}\ and\ \citenamefont
  {{Gabadadze}}(2015)}]{domokos}%
  \BibitemOpen
  \bibfield  {author} {\bibinfo {author} {\bibfnamefont {S.~K.}\ \bibnamefont
  {{Domokos}}}\ and\ \bibinfo {author} {\bibfnamefont {G.}~\bibnamefont
  {{Gabadadze}}},\ }\href@noop {} {\bibfield  {journal} {\bibinfo  {journal}
  {ArXiv e-prints}\ } (\bibinfo {year} {2015})},\ \Eprint
  {http://arxiv.org/abs/1509.03285} {arXiv:1509.03285 [hep-th]} \BibitemShut
  {NoStop}%
\bibitem [{\citenamefont {Jost}(2008)}]{j}%
  \BibitemOpen
  \bibfield  {author} {\bibinfo {author} {\bibfnamefont {J.}~\bibnamefont
  {Jost}},\ }\href@noop {} {\emph {\bibinfo {title} {Riemannian geometry and
  geometric analysis}}}\ (\bibinfo  {publisher} {Springer Science \& Business
  Media},\ \bibinfo {year} {2008})\BibitemShut {NoStop}%
\bibitem [{\citenamefont {{Banks}}\ \emph {et~al.}(1998)\citenamefont
  {{Banks}}, \citenamefont {{Douglas}}, \citenamefont {{Horowitz}},\ and\
  \citenamefont {{Martinec}}}]{bdhm}%
  \BibitemOpen
  \bibfield  {author} {\bibinfo {author} {\bibfnamefont {T.}~\bibnamefont
  {{Banks}}}, \bibinfo {author} {\bibfnamefont {M.~R.}\ \bibnamefont
  {{Douglas}}}, \bibinfo {author} {\bibfnamefont {G.~T.}\ \bibnamefont
  {{Horowitz}}}, \ and\ \bibinfo {author} {\bibfnamefont {E.}~\bibnamefont
  {{Martinec}}},\ }\href@noop {} {\bibfield  {journal} {\bibinfo  {journal}
  {ArXiv High Energy Physics - Theory e-prints}\ } (\bibinfo {year} {1998})},\
  \Eprint {http://arxiv.org/abs/hep-th/9808016} {hep-th/9808016} \BibitemShut
  {NoStop}%
\bibitem [{\citenamefont {{Klebanov}}\ and\ \citenamefont
  {{Witten}}(1999)}]{klebanov2}%
  \BibitemOpen
  \bibfield  {author} {\bibinfo {author} {\bibfnamefont {I.~R.}\ \bibnamefont
  {{Klebanov}}}\ and\ \bibinfo {author} {\bibfnamefont {E.}~\bibnamefont
  {{Witten}}},\ }\href {\doibase 10.1016/S0550-3213(99)00387-9} {\bibfield
  {journal} {\bibinfo  {journal} {Nuclear Physics B}\ }\textbf {\bibinfo
  {volume} {556}},\ \bibinfo {pages} {89} (\bibinfo {year} {1999})},\ \Eprint
  {http://arxiv.org/abs/hep-th/9905104} {hep-th/9905104} \BibitemShut {NoStop}%
\bibitem [{\citenamefont {{Gout{\'e}raux}}\ \emph {et~al.}(2012)\citenamefont
  {{Gout{\'e}raux}}, \citenamefont {{Smolic}}, \citenamefont {{Smolic}},
  \citenamefont {{Skenderis}},\ and\ \citenamefont {{Taylor}}}]{skenderis}%
  \BibitemOpen
  \bibfield  {author} {\bibinfo {author} {\bibfnamefont {B.}~\bibnamefont
  {{Gout{\'e}raux}}}, \bibinfo {author} {\bibfnamefont {J.}~\bibnamefont
  {{Smolic}}}, \bibinfo {author} {\bibfnamefont {M.}~\bibnamefont {{Smolic}}},
  \bibinfo {author} {\bibfnamefont {K.}~\bibnamefont {{Skenderis}}}, \ and\
  \bibinfo {author} {\bibfnamefont {M.}~\bibnamefont {{Taylor}}},\ }\href
  {\doibase 10.1007/JHEP01(2012)089} {\bibfield  {journal} {\bibinfo  {journal}
  {Journal of High Energy Physics}\ }\textbf {\bibinfo {volume} {1}},\ \bibinfo
  {eid} {89} (\bibinfo {year} {2012})},\ \Eprint
  {http://arxiv.org/abs/1110.2320} {arXiv:1110.2320 [hep-th]} \BibitemShut
  {NoStop}%
\bibitem [{\citenamefont {Maldacena}(1998)}]{maldacena}%
  \BibitemOpen
  \bibfield  {author} {\bibinfo {author} {\bibfnamefont {J.}~\bibnamefont
  {Maldacena}},\ }\href@noop {} {\bibfield  {journal} {\bibinfo  {journal}
  {Adv. Theor. Math. Phys.}\ }\textbf {\bibinfo {volume} {2}},\ \bibinfo
  {pages} {231} (\bibinfo {year} {1998})}\BibitemShut {NoStop}%
\bibitem [{\citenamefont {Limtragool}\ and\ \citenamefont
  {Phillips}(2018)}]{tor}%
  \BibitemOpen
  \bibfield  {author} {\bibinfo {author} {\bibfnamefont {K.}~\bibnamefont
  {Limtragool}}\ and\ \bibinfo {author} {\bibfnamefont {P.~W.}\ \bibnamefont
  {Phillips}},\ }\href {\doibase 10.1209/0295-5075/121/27003} {\bibfield
  {journal} {\bibinfo  {journal} {{EPL} (Europhysics Letters)}\ }\textbf
  {\bibinfo {volume} {121}},\ \bibinfo {pages} {27003} (\bibinfo {year}
  {2018})}\BibitemShut {NoStop}%
\end{thebibliography}
%\end{thebibliography}
\end{document}